\def\openone{\leavevmode\hbox{\small$1$\normalsize\kern-.33em$1$}}
\newcommand{\abs}[1]{\left| #1 \right|} 
\newcommand{\avg}[1]{\left< #1 \right>}
\newcommand{\matrixel}[3]{\left< #1 \vphantom{#2#3} \right|
 #2 \left| #3 \vphantom{#1#2} \right>} 
\let\baraccent=\= 
\renewcommand{\=}[1]{\stackrel{#1}{=}}
\DeclareMathOperator{\Tr}{\text{tr}}
\newcommand{\ketbra}[2]{\left| #1 \vphantom{#2}\right>\!\!\left< #2\vphantom{#1}\right|}
\newcommand{\be}{\begin{equation}}
\newcommand{\bel}[1]{\begin{equation}\label{#1}}
\newcommand{\ee}{\end{equation}}
\newcommand{\C}{\mathbb{C}}
\renewcommand{\H}{\hat{\mathcal{H}}}
\renewcommand{\Tr}{\operatorname{tr}}
\newcommand\blfootnote[1]{%
  \begingroup
  \renewcommand\thefootnote{}\footnote{#1}%
  \addtocounter{footnote}{-1}%
  \endgroup
}
\newtheorem{theorem}{Theorem}
\newtheorem{lemma}[theorem]{Lemma}
\begin{document}
\title{Generalized Collective Lamb Shift}
    \author{Andrew Stasiuk}\blfootnote{Andrew Stasiuk and Lane Gunderman are the co-first-authors of this work and contributed equally.}
    \email{andrew.stasiuk@uwaterloo.ca}
    \affiliation{The Institute for Quantum Computing, University of Waterloo, Waterloo, Ontario, N2L 3G1, Canada}
    \affiliation{Department of Applied Mathematics, University of Waterloo, Waterloo, Ontario, N2L 3G1, Canada}
    \author{Lane G. Gunderman}
    \email{lgunderman@uwaterloo.ca}
    \affiliation{The Institute for Quantum Computing, University of Waterloo, Waterloo, Ontario, N2L 3G1, Canada}
    \affiliation{Department of Physics and Astronomy, University of Waterloo, Waterloo, Ontario, N2L 3G1, Canada}
    \author{\nolinebreak Mohamed \nolinebreak El \nolinebreak Mandouh}
    \affiliation{The Institute for Quantum Computing, University of Waterloo, Waterloo, Ontario, N2L 3G1, Canada}
    \affiliation{Department of Applied Mathematics, University of Waterloo, Waterloo, Ontario, N2L 3G1, Canada}
    \author{Troy W. Borneman}
    \affiliation{The Institute for Quantum Computing, University of Waterloo, Waterloo, Ontario, N2L 3G1, Canada}
    \affiliation{High Q Technologies Inc, Waterloo, Ontario, N2L 3G1, Canada}
    \author{David G. Cory}
    \affiliation{The Institute for Quantum Computing, University of Waterloo, Waterloo, Ontario, N2L 3G1, Canada}
    \affiliation{Department of Chemistry, University of Waterloo, Waterloo, Ontario, N2L 3G1, Canada}

\date{\today} 

\begin{abstract}
Hybrid quantum systems consisting of an ensemble of two--level systems interacting with a single--mode electromagnetic field are important for the development of quantum information processors and other quantum devices. These systems are characterized by the set of energy level hybridizations, split by collective Lamb shifts, that occur when the ensemble and field mode interact coherently with high cooperativity. Computing the full set of Lamb shifts is generally intractable given the high dimensionality of many devices. In this work, we present a set of techniques that allow a compact description of the Lamb shift statistics across all collective angular momentum subspaces of the ensemble without using restrictive approximations on the state space. We use these techniques to both analyze the Lamb shift in all subspaces and excitation manifolds and to describe the average observed Lamb shift weighted over the degeneracies of all subspaces. 
\end{abstract}

\keywords{cQED, Tavis--Cummings Model, Lamb shift}

\maketitle


\section{Introduction} \label{sec:intro}
    The field of quantum electrodynamics (QED) was initiated by Lamb's discovery that an electron interacts with its own radiation field to split the energies of the $2s_{1/2}$ and $2p_{1/2}$ levels of the Hydrogen atom \cite{lamb1947fine}. This splitting, referred to as the Lamb shift, demonstrates that the electromagnetic field and vacuum are quantized. Cavity QED systems, where two--level quantum systems (such as atoms) are confined in a high-finesse cavity, also present features analogous to the Lamb shift. The light--matter interaction breaks degeneracies between separable field and atom states with the same number of excitations, $k$, hybridizing the states with a splitting that scales in magnitude as $\sqrt{k}$. Provided the cavity finesse is high enough and the atomic coherence long enough, this hybridization may be observed experimentally \cite{haroche2006exploring}. Understanding the structure of cavity QED Lamb shifts has become particularly important recently due to their importance in the development of large-scale quantum information processors \cite{blais_circuit_2020,fink_dressed_2009,yang_probing_2020,zou_implementation_2014} and hybrid quantum devices \cite{kurizki_quantum_2015,morton_hybrid_2011,xiang_hybrid_2013}, such as quantum memories for microwave photons \cite{kubo_hybrid_2011,grezes_multimode_2014,wu_storage_2010,zhu_coherent_2011} and optical photons \cite{hammerer_quantum_2010,afzelius_proposal_2013,vivoli_high-bandwidth_2013}. Collective Lamb shifts also play a crucial role in radiative ground-state cooling of an ensemble \cite{wood_cavity_2014,wood_cavity_2016,bienfait_controlling_2016,albanese_radiative_2020,ranjan_pulsed_2019}
    
    The notion of a Lamb shift in cavity QED may be formally defined using the Jaynes--Cummings Hamiltonian describing the light--matter interaction of a single two-level quantum system with a quantized single--mode electromagnetic field \cite{jaynes1963comparison} ($\hbar = 1$ throughout work):
    \begin{align}\label{eq.JC}
    \nonumber \H_{0} &= \omega_c \hat{a}^\dagger\hat{a} + \frac{\omega_s}{2}\hat{\sigma}_z,\\
    \nonumber \H_{int} &= g_0(\hat{a}^\dagger \hat{\sigma}_- + \hat{a}\hat{\sigma}_+),\\
    \H_{JC} &= \H_{0} + \H_{int},
    \end{align}
    where $\H_{0}$ is the Hamiltonian describing the quantization of the two--level system and single--mode separately, and $\H_{int}$ is the Hamiltonian describing the interaction of the two-level system with the field mode. The lowering (raising) operators, $\hat{a}$ ($\hat{a}^\dagger$), describe the annihilation (creation) of a photon in the field mode with energy $\omega_c$. The number operator, $\hat{a}^\dagger\hat{a}$, describes the quantization of the field mode in terms of the number of photons, $n$, occupying the mode, and defines the Fock eigenstates, $\ket{n}$, as $\hat{a}^\dagger\hat{a}\ket{n} = n\ket{n}$. The corresponding quantization of the two-level system is given by Zeeman eigenstates, $\hat{\sigma}_z\ket{\uparrow} = +\ket{\uparrow}$ and $\hat{\sigma}_z\ket{\downarrow} = -\ket{\downarrow}$, with energy splitting $\omega_s$ given by the Pauli $z$ spin operator. For brevity, we will refer to a general two-level quantum system as a spin and the single--mode electromagnetic field as a cavity.
    
    The spin--cavity interaction describes a coherent swapping of a single photon between the spin and the cavity mode, where $\hat{\sigma}_+$ and $\hat{\sigma}_-$ correspond, respectively, to the creation and annihilation of a photon in the spin system. The strength of the spin--cavity interaction is given by the geometric parameter
    \begin{equation}
    g_0 = g_e \mu_B \sqrt{\frac{\mu_0 \omega_c}{2 V_c}},    
    \end{equation}
    where $g_e$ is the electron Landau g-factor, $\mu_B$ is the Bohr magneton, $\mu_0$ is the permeability of free--space, and $V_c$ is the mode volume of the cavity. We restrict ourselves to the regime $g_0 \ll \omega_c, \omega_s$ such that a rotating--wave approximation (RWA) may be applied to suppress multi-photon processes. For simplicity, we will also restrict our argument to the case where the spin system and cavity mode are resonant ($\omega_c = \omega_s = \omega_0$). 
    
    We denote the separable $k$-excitation eigenstates of $\H_0$ as $\lbrace \ket{k}\ket{\downarrow}, \ket{k-1}\ket{\uparrow} \rbrace$ and, upon diagonalization under $\H_{int}$, the resulting hybridized spin-cavity eigenstates are $\lbrace \frac{1}{\sqrt{2}}(\ket{k}\ket{\downarrow}+\ket{k-1}\ket{\uparrow}),\frac{1}{\sqrt{2}}(\ket{k}\ket{\downarrow}-\ket{k-1}\ket{\uparrow})\rbrace$, with a Lamb shift splitting given by $g_0\sqrt{k}$. In the special case of $k=1$ (the single--excitation manifold), the Lamb shift splitting is commonly referred to as a ``normal mode'' or ``vacuum Rabi'' splitting \cite{haroche2006exploring}. The non--linearity of the Lamb shift with excitation number has been observed experimentally to verify the ``quantum'' nature of the spin--cavity interaction \cite{fink2008climbing}.
    
    The Jaynes--Cummings model may be generalized to an ensemble of $N$ non--interacting spins collectively interacting with a single--mode cavity to yield the Tavis--Cummings Hamiltonian \cite{tavis1968exact},
    \begin{equation}\label{eq.TC}
    \H_{TC} = \omega_0 (\hat{a}^\dagger\hat{a} + \hat{J}_z) +  g_0(\hat{a}^\dagger \hat{J}_- + \hat{a}\hat{J}_+),
    \end{equation}
    where the two--level spin operators have been replaced with collective operators that act identically over the ensemble of energetically indistinguishable spins and a RWA has once again been made. We will formally define the collective operators in the next section. An important feature of the TC Hamiltonian is that the spin--cavity interaction strength, $g_0$, is often replaced with an effective interaction strength that is enhanced by $\sqrt{N}$:
    \begin{equation}
    g_{eff} = g_0\sqrt{N}.
    \end{equation}
    This transformation is paired with a $1/\sqrt{N}$ term in the collective angular momentum raising and lowering operators, which we will omit in this work.
   
   The ensemble enhancement of the spin--cavity interaction strength has allowed observation of an analogous normal mode splitting (often referred to as ``strong coupling'') in ensemble spin systems interacting with high quality factor (high Q) cavities \cite{schuster_high-cooperativity_2010,benningshof_superconducting_2013,imamoglu_cavity_2009,kubo_strong_2010}. The relative strengths of the parameters necessary to resolve this splitting are formalized by defining the cooperativity:
    \begin{equation}
    C = \frac{4Ng_0^2 Q T_2}{\omega_0},    
    \end{equation}
    where $Q$ is the quality factor of the cavity and $T_2$ is the coherence time of the spin ensemble.
    
    In general, experimentally observed splittings in a high--cooperativity spin--cavity system are a complex function of the many Lamb shifts that occur in various collective angular momentum subspaces and excitation subspaces. Calculating the full set of Lamb shifts is generally intractable for systems of the size required to build useful quantum devices, leading to a number of approximation methods being utilized to analyze experimental data. The most common approximations are to restrict the treatment to only the largest, permutation--invariant, Dicke subspace, or to treat the spin ensemble in the low--excitation regime as a simple quantum harmonic oscillator \cite{holstein1940field}. Many limiting results have been shown and can be seen in \cite{garraway2011dicke}, however, we analyze many aspects of the TC Hamiltonian in further detail. 
  
     In this paper, we revisit the structure of the Tavis--Cummings Hamiltonian and find that we are able to identify a two parameter family of collective Lamb shifts indexed by total number of excitations and collective angular momentum. We then show that, in general, a correct description of the energy landscape requires considering many collective angular momentum subspaces in all but the simplest limiting cases. We show that the representative set of spaces never limit to the Dicke space, nor a constant value, but instead grow as $O\left(\sqrt{N}\right)$. We then proceed to give non--trivial descriptive statistics on these Lamb shifts' scaling behaviors, culminating in a description of these statistics upon averaging over the degeneracies of these collective angular momentum subspaces. These insights provide bounds and estimates for structures that can be experimentally observed inthe near future.
     
     This paper is structured as follows. We begin by recalling common definitions in section \ref{sec:def}. Then, in section \ref{sec:methods} we discuss features of the Tavis--Cummings Hamiltonian that allow us to break it into subspaces of constant unperturbed energy and define degenerate copies of coupling matrices which act within these subspaces. We also show some examples of these coupling matrices as well as introduce our generalized collective Lamb shift. In section \ref{sec:results} we show our primary results, which include finding the maximally degenerate angular momentum subspace, and provide proof that the majority of the relevant dynamics occurs in this subspace and neighboring subspaces. We also describe basic properties of the collective Lamb shift within each subspace, and lastly combining these collective Lamb shifts over their degeneracies to predict what should be experimentally seen for such a system at a given number of total excitations. We then end with a discussion of our results in sections \ref{sec:discussion} and \ref{sec:conclusion}.  
\section{Definitions} \label{sec:def}
    We include our choice of notation and definitions in this section. The Pauli operators are written in the Zeeman basis, so that
    \begin{equation}
        \hat{\sigma}_z = \ketbra{\uparrow}{\uparrow} - \ketbra{\downarrow}{\downarrow}.
    \end{equation}
    Also,
    \begin{equation}
        \hat{\sigma}_+ \ket{\downarrow} = \ket{\uparrow},\quad \hat{\sigma}_+ |\uparrow\rangle=0 .
    \end{equation}
    These spin operators can be combined as a sum of tensor products to produce the collective versions of these spin operators. Let $N$ be the number of spin--$1/2$ particles. Then, the collective $z$ angular momentum operator is given as,
    \begin{equation}
        \hat{J}_z = \frac{1}{2} \sum_{i=1}^N \hat{\sigma}_z^{(i)},
    \end{equation}
    where the superscript on the Pauli operator indicates action only on the $i$-th particle. Similarly, the collective raising and lowering collective angular momentum operators are given as
    \begin{equation}
        \hat{J}_\pm = \sum_{i=1}^N \hat{\sigma}_\pm^{(i)}.
    \end{equation}
    The collective operators span the $\mathfrak{sl}(2;\C)$ Lie algebra, and thus satisfy the following commutation relations:
    \begin{align}
        [\hat{J}_z , \hat{J}_\pm] = \pm \hat{J}_\pm,\\
        [\hat{J}_+, \hat{J}_-] = 2\hat{J}_z.
    \end{align}
    The collective operator algebra is a sub-algebra of self--adjoint operators acting on the $N$-spin system, and conveniently satisfies the same commutation relations as those for a single particle spin operator. By a change of basis, we can identify the transverse spin operators,
    \begin{align}
        \hat{J}_x &= \frac{1}{2}\big(\hat{J}_+ + \hat{J}_-\big),\\
        \hat{J}_y &= \frac{1}{2i}\big(\hat{J}_+ - \hat{J}_-\big).
    \end{align}
    The transverse spin operators, along with $\hat{J}_z$, span a collective $\mathfrak{su}(2)$ algebra, which differ from a single spin-1/2 Pauli operators in that the collective operators are not involutory. The representations of the $\mathfrak{sl}(2;\C)$ operators can be defined by their action on a state of total angular momentum $j$ with $z$ component $m$:
    \begin{align}
        \hat{J}_z \ket{j,m} &= m\ket{j,m}\\
        \hat{J}_\pm \ket{j,m} &= \sqrt{j(j+1) - m(m\pm 1)}\ket{j,m \pm 1}.
    \end{align}
    
    Throughout this work, we focus on two good quantum numbers representing conserved quantities. The first of these is the total angular momentum, $j$, which determines the eigenvalues of the total angular momentum operator,
    \begin{equation}
        \hat{\bm{J}}^2 = \hat{J}_x^2 + \hat{J}_y^2 + \hat{J}_z^2, 
    \end{equation}
    with eigenvalues $j(j+1)$. The second of these conserved quantities is the number of total excitations, $k$, given as the eigenvalues of the excitation operator,
    \begin{equation}
        \hat{K} = \hat{a}^\dagger\hat{a} + \hat{J}_z + \frac{N}{2}\openone.
    \end{equation}
    The scaled identity term in the excitation operator ensures excitations are non-negative, as the action of $\hat{J}_z$ on the ground state has eigenvalue $-N/2$.
    
    All of the Hamiltonians considered in this work share the same internal structure, defined by 
    \begin{equation}
        \H_0 =  \omega_0 (\hat{a}^\dagger\hat{a} + \hat{J}_z).
    \end{equation}
    The interaction Hamiltonian is given by the Dicke model \cite{dicke1954coherence}:
    \begin{equation}\label{dicke}
        \H_{D,int} = 2g_0 \big(\hat{a} + \hat{a}^\dagger\big) \hat{J}_x.
    \end{equation}
    Applying the rotating wave approximation, the counter--rotating term is discarded, leaving us with
    \begin{equation}
        \H_{int} = g_0 \big(\hat{a} \hat{J}_+ + \hat{a}^\dagger \hat{J}_- \big),
    \end{equation}
    which is the interaction term in the Tavis--Cummings (TC) Hamiltonian. \if{false}Note that tensor products are omitted for brevity, so that, for example,
    \begin{equation}
        \hat{a} \otimes \hat{J}_+ \equiv \hat{a}\hat{J}_+.
    \end{equation}\fi

    Lastly, we note that all states live within the Hilbert space,
    \begin{equation}
        \mathscr{H} = \operatorname{L}^2\left(\mathbb{R}\right) \otimes \big(\C^2\big)^{\otimes N}.
    \end{equation}
    
    It is common to perform a Holstein--Primakoff transformation on the collective angular momentum operators in order to simplify the underlying algebra \cite{holstein1940field}. This transformation is valid on a single subspace of angular momentum $j$, such that
    \begin{align}
        \hat{J}_+ &\longrightarrow \hat{b}^\dagger \sqrt{2j\openone - \hat{b}^\dagger\hat{b}}\\
        \hat{J}_- &\longrightarrow  \sqrt{2j\openone - \hat{b}^\dagger\hat{b}}\hspace{.25cm}\hat{b}.
    \end{align}
   By requiring that standard angular momentum commutation relations are maintained, the transformation for $\hat{J}_z$ is then fixed:
   \begin{equation}
       \hat{J}_z \longrightarrow \hat{b}^\dagger \hat{b} - j \openone.
   \end{equation}
   Usually $j$ is taken to be the Dicke subspace, such that $j=N/2$, and $N$ is assumed to be large compared to the number of excitations. If the number of excitations approach $j$, the spin system begins to saturate and the approximation becomes increasingly invalid \cite{ressayre1975holstein}. In general, thermal population of the Dicke space is negligible at nearly all experimental temperatures \cite{wesenberg2002mixed}, so we avoid making restrictive approximations in this work and treat the Hamiltonian in generality across all subspaces and excitation manifolds.

    We are now equipped with our primary definitions and notations. In the next section we discuss what features of our Hamiltonian allow us to decompose the interaction portion of the Hamiltonian into a direct sum of coupling matrices, allowing us to show results regarding the subspaces forming the bases of these coupling matrices.
\section{Symmetry and Subspace Decomposition} \label{sec:methods}
    In this section, we discuss the symmetries of various models of spin ensembles interacting with a cavity. Through the use of conserved quantities, we motivate a decomposition of the TC Hamiltonian into a two parameter family of subspaces indexed by the good quantum numbers present in the system. We then solidify the remaining notation to be used in the rest of the paper, relying heavily on the symmetry motivated subspace decomposition, and provide instructive examples for small values of $N$.

\subsection{Symmetries of Light--Matter Interaction} \label{subsec:Symmetry}

    Generally, an ensemble of $N$ spins identically coupled to a single cavity mode is described by the Dicke model, with a Hamiltonian given by 
    \begin{align}
        \mathcal{\hat{H}} =  \omega_{0} (\hat{a}^\dagger \hat{a} + \hat{J}_z) + g_0(\hat{a}^\dagger + \hat{a} ) \hat{J}_x.
    \end{align}
    The Dicke Hamiltonian can be decomposed into two distinct parts, the bare spin and cavity energies, $\mathcal{\hat{H}}_0$, and the spin--cavity interaction $\mathcal{\hat{H}}_{\text{D,int}}$. When the collective spin--cavity interaction, $g_{eff}$, is zero, the ground state is $\ket{0}\ket{N/2, -N/2}$, which represents the state with zero photons in the cavity and all spins in their ground states. When $g_{eff}>0$, the Dicke Hamiltonian is symmetric under the parity operator $\operatorname{\hat{\Pi}} = \operatorname{exp}\left[-i \pi \left(\hat{J}_z + \hat{a}^\dagger\hat{a}\right)\right]$, with eigenvalues $\pm 1$. This implies that the Hilbert space of the Dicke model can be decomposed into a direct sum of two spaces labelled by the parity operator's sign: $\mathscr{H} = \mathscr{H}_{+} \oplus \mathscr{H}_{-}$. In this model, excitations are not conserved, and the two parity subspaces are infinite dimensional.
    
    For the case of $N=1$, this particularization of the Dicke model is known as the Quantum Rabi Model (QRM), which has recently been solved \cite{zhong2013analytical,maciejewski2014full}, where eigenvalues and eigenstates are given in terms of special functions \cite{judd1979exact}. The existence of this solution can be seen directly from the symmetry group of the Hamiltonian, as the parity symmetry is sufficient to show that the QRM is integrable \cite{braak2011integrability}. When we consider $N>1$, the parity symmetry is no longer sufficient to show integrability, and it is expected that the Dicke model is not exactly solvable \cite{baxter2016exactly}; in other words, there are no explicit solutions in terms of any known functions.
    
    Turning now to the model of interest for this paper, the Tavis--Cummings Hamiltonian is derived by the application of a RWA to the Dicke Hamiltonian. Unlike the Dicke Model, the TC model admits a continuous symmetry described by the circle group, $U(1)$, in addition to parity symmetry and total angular--momentum symmetry. The generator of the continuous symmetry has infinite eigenvalues, enumerated by  $k \in \mathbb{N}$, while the total angular momentum symmetry has eigenvalues $j = N/2, N/2-1, \cdots, 1/2\ (0)$, where the last value for $j$ is determined by whether $N$ is odd or even.
    The additional symmetry is sufficient to make the Tavis--Cummings model integrable and solvable, which is supported by the Bethe ansatz solution provided by Bogoliubov \cite{bogoliubov1996exact, bogolyubov2000algebraic}. 
    
    Given that $\hat{J}_+$ conserves total angular momentum $j$, the repeated action of $\hat{J}_+$ on the ground state of an $N$ spin ensemble will only populate the $N+1$ fully symmetric states in the Dicke subspace. Bogoliubov utilized these orbits to verify a Bethe ansatz solution of the Tavis--Cummings model is correct, casting the eigenvalue problem as equivalent to solving a differential equation \cite{bogoliubov1996exact, bogolyubov2000algebraic}. Translating their construction into our notation, the primary expression is:
    \begin{equation}\label{eq:bogolub_construction}
        \ket{\Phi_{j,k}^\lambda} = \sum_{m=0}^k A_{j,k,m}^\lambda (\hat{a}^\dagger)^{k-m} \hat{J}_+^m \ket{0} \ket{j,-j},
    \end{equation}
    for recursively defined scalar coefficients $A_{j,k,m}^\lambda$ determined from difference equations, where $j$ indicates the angular momentum space, $k$ the excitation subspace, and $\lambda$ to a labeled eigenvector within the $(j,k)$ subspace. Putting together these symmetry observations, we see that the TC Hamiltonian can be tractably analyzed in terms of its structure and dynamics. In section \ref{sec:results}, we provide a detailed analysis of the TC Hamiltonian's energy level structure across all non-interacting subspaces. 
    
    The two symmetries of the TC model directly imply that the Hamiltonian admits a two parameter subspace decomposition. We will repeatedly make use of this fact throughout the remainder of our analysis. Within the context of previous work, the Holstein--Primakoff approximation largely ignores the second parameter $j$ by focusing on a single value of it, particularly the $j=N/2$ subspace which is being treated as a single harmonic oscillator. The Bogoliubov solution via Bethe ansatz, while correct, is equally as hard as solving the eigenvalue problem itself. Further work attempting to directly analyze large photon number behavior via a direct diagonalization approach has been performed by restriction to the Dicke subspace and tested experimentally by Chiorescu \textit{et al} \cite{chiorescu2010magnetic}. We will demonstrate in later sections that the most dominantly contributing angular momentum subspaces are, in general, those with the lowest $O(\sqrt{N})$ $j$ values allowed by the model.
    
\subsection{Subspace Decomposition of the TC Model}

    Subsection \ref{subsec:Symmetry} argued that we can use group theory to decompose the total Hamiltonian into a direct sum structure indexed by two parameters defined by the conserved quantities of the system, $j$ and $k$. A direct sum decomposition is not novel, and was given explicitly in the original 1968 paper defining the Tavis--Cummings Hamiltonian \cite{tavis1968exact}. Recast in our notation, the decomposition is
    \begin{equation}
        \H \cong \bigoplus_{j,k}\big( \omega_0 k \openone_{j,k} +  g_0 L(j,k)\big),
    \end{equation}
    where $L(j,k)$ are a natural representation of the interaction Hamiltonian, which we define in equation \eqref{eq:coupling_matrix} and refer to as the coupling matrices. While Tavis and Cummings focused on the eigenstates of their model, computed by recasting the diagonalization problem as a differential equation \cite{tavis1968exact}, our work focuses on the energy eigenvalue problem, utilizing modern insights into numerical linear algebra to provide a deeper analysis.
    
    We define a natural basis for a general $(j,k)$ subspace with total angular momentum $j$ and $k$ excitations as 
    \begin{equation}\label{eq:lamb_basis}
        \mathcal{B}_{j,k} = \lbrace \ket{\alpha_{j,k}} \,|\, \alpha = 1,\cdots,n_{j,k}, n_{j,k} + 1\rbrace,
    \end{equation}
    using a shorthand ket representation of the tensor product of a spin-cavity state
    \begin{equation}
        \ket{\alpha_{j,k}} = \ket{k-\alpha-k_0(j)}\ket{j,-j+\alpha}.
    \end{equation}
    The single parameter, $\alpha$, provides a convenient representation of states within a $(j,k)$ subspace. The value, $n_{j,k} = \abs{ \mathcal{B}_{j,k} } - 1$, one less than the dimension, is chosen for convenience. We define $k_0(j) = N/2-j$ as the number of excitations present in the ground state of an angular momentum $j$ subspace within an $N$ spin ensemble. Explicitly, $n_{j,k}$ is given as
    \begin{equation}
        n_{j,k} = \operatorname{min}\lbrace 2j, k - k_0(j)\rbrace.
    \end{equation}
    If $k < k_0(j)$, then the basis set is empty and there are no states present at this excitation level within the $j$ angular momentum subspace.
    
    Under unitary evolution generated by collective operators, two subspaces with the same value of $j$ stemming from ensembles of differing $N$ will behave identically, as these subspaces have isomorphic representations. The main functional difference between them is their relative locations within the energy level spectrum of their respective Hamiltonians. Thus, while the evolution or action of a collective operator can be calculated identically, the resultant contribution of the evolution to aggregate statistics or an observable will be weighted differently.
    
    The coupling matrices' entries can be found by applying the interaction term from $\mathcal{\hat{H}}_{TC}$ on the bases defined in equation \eqref{eq:lamb_basis}. The Lamb shift coupling matrix for the $(j,k)$ subspace is then given by
    \begin{align}\label{eq:coupling_matrix}
        \nonumber L(j,k) &= \sum_{\alpha=1}^{n_{j,k}} l_{\alpha}(j,k)\bigg(\ketbra{\alpha_{j,k}}{(\alpha+1)_{j,k}}\\
        &\hspace{1.5cm}+ \ketbra{(\alpha+1)_{j,k}}{\alpha_{j,k}}\bigg),
    \end{align}
    where the matrix elements $l_{\alpha}(j,k)$ are given by
    \begin{align}
        \nonumber &\frac{1}{ g_0}\matrixel{\alpha_{j,k}}{\H_{int}}{(\alpha+1)_{j,k}}\\
        &= \sqrt{\big(2\alpha j - \alpha(\alpha-1)
        \big)\big(k-k_0(j)-\alpha+1\big)}.
    \end{align}
    In the above expression subscripts are only included within kets such as $\ket{\alpha_{j,k}}$, while $\alpha$ itself is a scalar.
    
    The index $j$ runs from $N/2$ to $0$ ($1/2$) when $N$ is even (odd). Each angular momentum space is of dimension $2j+1$, and so the total number of spin states accounted for across all values of $j$ is $O(N^2)$,\if{false}with the leading term given as $\frac{N^2}{4}$, \fi which is far less than the full space's dimension of $2^N$. As of this point we have neglected to include the degeneracy of each of the angular momentum subspaces. The degeneracy of the subspace with total angular momentum $j$ on $N$ spins is given as
    \begin{equation}
        d_j = \frac{N! (2j+1)}{(N/2 - j)!(N/2+j+1)!}.
    \end{equation}
    That is, there are $d_j$ disjoint angular momentum subspaces with total angular momentum $j$ present in a direct sum decomposition of $\big(\C^2\big)^{\otimes N}$ \cite{wesenberg2002mixed}. By including this degeneracy we recover the identity that the sum over the dimension of all disjoint subspaces is equal to the dimension of the entire space,
    \begin{equation}
        \sum_j (2j+1)d_j=2^N.
    \end{equation}
    
    Through Schur--Weyl duality, we can associate total angular momentum symmetry with invariance over permutations (or subgroups of permutations) of the ordering of the underlying spin Hilbert spaces \cite{weyl1946classical}. Within this context, the subspace with $j=N/2$, commonly known as the Dicke subspace, is referred to as the fully symmetric subspace. This is due to every angular momentum state within the $j=N/2$ subspace remaining invariant under action of any permutation in $S(N)$, the permutation group of order $N$. The remaining subspaces have a more complex structure under the action of a spin-permutation. 
    
    Importantly, each degenerate copy of a $j$ subspace can be naturally and uniquely labelled by a Young Tableau. If one wished to consider a perturbation to the TC Hamiltonian which distinguished individual spins, such as a field inhomogeneity, then these Young Tableaux would be required to properly determine the perturbation's action on subspaces with identical total angular momentum $j$. As previously mentioned, in our work we focus on the ideal case with no perturbations. As such, it is sufficient to treat each degenerate copy of a given total angular momentum subspace as identical. Under this identification, we are able to reduce the effective spin dimensionality from $2^N$ to $O(N^2)$.

\subsection{Examples for Small $N$}\label{examples}

We now proceed to explicitly calculate the collective Lamb shift in two small $N$ spin--cavity systems. This is shown mathematically by re--diagonalization under the perturbative interaction Hamiltonian and finding how these re--diagonalized states' energies differ from those where $\mathcal{H}_{int}=0$, or equivalently, where $g_0 = 0$.


\subsubsection{Single Spin}

The case of a single spin coupled to one electromagnetic mode is known as the Jaynes--Cummings model \cite{jaynes1963comparison}. The JC Hamiltonian follows from the application of the RWA on the Dicke Hamiltonian for a single spin, also known as the Rabi model, and is given by 
\begin{equation}
    \H_0 + \H_{int} =  \omega_{0} (\hat{a}^\dag \hat{a} + \hat{\sigma}_z) + g_0 \left( \hat{a}^\dag \sigma_{-} +\hat{a} \sigma_{+} \right).
\end{equation}
Since $\H_{int}$ couples spins with equal energy in the unperturbed spectrum, the Hilbert space decouples into blocks of constant total excitation, indexed by the good quantum number $k$:
\begin{multline}
 \bigoplus_k |\psi_k\rangle\Big[ \langle\psi_k| \mathcal{\hat{H}}_0+\mathcal{\hat{H}}_{int} |\phi_k\rangle\Big] \langle\phi_k|,\\
 \text{ with }\mathcal{\hat{H}}_0\phi_k=E_k\phi_k \text{ and } \mathcal{\hat{H}}_0\psi_k=E_k\psi_k. 
\end{multline}

The ground state $|0\rangle|\downarrow\rangle$ is unique, and is thus not hybridized. For the remaining states, we utilize the fact that excitations are conserved. Consider the two states with excitations $k > 0$, defined by $\lbrace \ket{k}\ket{\downarrow}, \ket{k-1}\ket{\uparrow}\rbrace$. The interaction Hamiltonian represented in this basis is given by the direct sum over all two--dimensional excitation spaces as follows:
\begin{equation}
\mathcal{\hat{H}}_{int}\cong \bigoplus_k g_0\begin{bmatrix}
0 & \sqrt{k}\\
\sqrt{k} & 0
\end{bmatrix}.
\end{equation}
The $k$ excitation representation of the interaction Hamiltonian has energy eigenvalues given by
\begin{equation}
    E_{k,\pm}= k\omega_0\pm g_0\sqrt{k},
\end{equation}
which correspond to the following energy eigenstates:
\begin{equation}
    |\psi_{k,\pm}\rangle=|k\rangle|\downarrow\rangle\pm |k-1\rangle|\uparrow\rangle.
\end{equation}

\subsubsection{Three Spins}\label{N3}

We now consider an $N=3$ spin--cavity system. We demonstrate the utility of the subspace decomposition technique by solving for the eigenstructure exactly. When the number of excitations are such that $k \leq 2$, the number of hybridized states are sub-maximal, as illustrated in figure \ref{fig:TC_hybrid}. For the purpose of this example, we focus on solving for a general collection of excitation subspaces with $k \geq 3$, ensuring that all $2^3=8$ spin states participate in hybridization. For completeness, we provide the solutions to the $N=3$ spin model with $k < 3$ excitations, as well as the $N=2$ spin model in the appendix using the same techniques illustrated in this section. For $N=3$ and $k\geq3$, a matrix representation of the interaction Hamiltonian is given by the matrix $L(k)$ in equation \eqref{eq:3spin_coupling}.

\begin{widetext}
\begin{equation}\label{eq:3spin_coupling}
L(k) = 
\begin{bmatrix}
0 & \sqrt{k} & \sqrt{k} & 0 & \sqrt{k} & 0 & 0 & 0\\
\sqrt{k} & 0 & 0 & \sqrt{k-1} & 0 & \sqrt{k-1} & 0 & 0\\
\sqrt{k} & 0 & 0 & \sqrt{k-1} & 0 & 0 & \sqrt{k-1} & 0\\
0 & \sqrt{k-1} & \sqrt{k-1} & 0 & 0 & 0 & 0 & \sqrt{k-2}\\
\sqrt{k} & 0 & 0 & 0 & 0 & \sqrt{k-1} & \sqrt{k-1} & 0\\
0 & \sqrt{k-1} & 0 & 0 & \sqrt{k-1} & 0 & 0 & \sqrt{k-2}\\
0 & 0 & \sqrt{k-1} & 0 & \sqrt{k-1} & 0 & 0 & \sqrt{k-2}\\
0 & 0 & 0 & \sqrt{k-2} & 0 & \sqrt{k-2} & \sqrt{k-2} & 0
\end{bmatrix},
\end{equation}
where the ordered basis states for this matrix representation are given by the set $\{|k\rangle |\downarrow\downarrow\downarrow\rangle,|k-1\rangle |\downarrow\downarrow\uparrow\rangle,\ldots |k-3\rangle |\uparrow\uparrow\uparrow\rangle\}$. This can be decomposed into 3 distinct subspaces, $\frac{3}{2} \oplus \frac{1}{2}\oplus \frac{1}{2}$, as follows:
\begin{equation}
    \begin{bmatrix}
    0 & \sqrt{3}\sqrt{k} & 0 & 0\\
    \sqrt{3}\sqrt{k} & 0 & 2\sqrt{k-1} & 0\\
    0 & 2\sqrt{k-1} & 0 & \sqrt{3}\sqrt{k-2}\\
    0 & 0 & \sqrt{3}\sqrt{k-2} & 0
    \end{bmatrix}\oplus\begin{bmatrix}
    0 & \sqrt{k-1}\\
    \sqrt{k-1} & 0
    \end{bmatrix}\oplus\begin{bmatrix}
    0 & \sqrt{k-1}\\
    \sqrt{k-1} & 0
    \end{bmatrix}.
\end{equation}
The first matrix is written in the Dicke (fully symmetric) basis (normalized versions of $|k-m\rangle \hat{J}_+^m|\downarrow\downarrow\downarrow\rangle$ with $m\in \{0,1,2,3\}$), while the second and third are written in terms of the composite spin--1/2 bases, given by
\begin{eqnarray}
\frac{1}{\sqrt{2}}|k-1\rangle(|\downarrow\uparrow\downarrow\rangle-|\uparrow\downarrow\downarrow\rangle)&,&\quad \frac{1}{\sqrt{2}}|k-2\rangle (|\downarrow\uparrow\uparrow\rangle-|\uparrow\downarrow\uparrow\rangle)\\
\frac{1}{\sqrt{6}}|k-1\rangle(2|\downarrow\downarrow\uparrow\rangle-|\uparrow\downarrow\downarrow\rangle-|\downarrow\uparrow\downarrow\rangle)&,&\quad \frac{1}{\sqrt{6}} |k-2\rangle (|\uparrow\downarrow\uparrow\rangle+|\downarrow\uparrow\uparrow\rangle-2|\uparrow\uparrow\downarrow\rangle).
\end{eqnarray}
\if{false}
\begin{eqnarray}
    |k-2\rangle |1/2_{2\text{ excitations},\text{ label }0}^{3\text{ spins}}\rangle&=&\frac{1}{\sqrt{3}}[(\frac{1}{\sqrt{2}}|\uparrow\downarrow\rangle+\frac{1}{\sqrt{2}}|\downarrow\uparrow\rangle)|\uparrow\rangle-\sqrt{2}|\uparrow\uparrow\downarrow\rangle]\\
    |k-1\rangle |1/2_{1,0}^3\rangle&=&\frac{1}{\sqrt{3}}[\frac{1}{\sqrt{2}}(2|\downarrow\downarrow\uparrow\rangle-|\uparrow\downarrow\downarrow\rangle-|\downarrow\uparrow\downarrow\rangle)]\\
    |k-2\rangle |1/2_{2,1}^3\rangle&=& \frac{1}{\sqrt{2}}\left[|\downarrow\uparrow\uparrow\rangle-|\uparrow\downarrow\uparrow\rangle\right]\\ 
    |k-1\rangle |1/2_{1,1}^3\rangle &=&\frac{1}{\sqrt{2}}[|\downarrow\uparrow\downarrow\rangle-|\uparrow\downarrow\downarrow\rangle].
\end{eqnarray}
\fi
The matrix representations for the degenerate spin-1/2 subspaces are identical, and thus indistinguishable under a collective operation or measurement. We have freedom in the choices of bases for the degenerate subspaces; the states we give are the standard basis states for these subspaces, as computed via a Clebsch--Gordon table \cite{mann2011introduction}.

We diagonalize each block individually starting with the matrix representing the $j=3/2$ subspace, and find the resulting (non-normalized) Lamb-shifted dressed states are given by the following superpositions:
\begin{eqnarray}\label{unevensplit}
    \nonumber |3/2, k; \pm_1, \pm_2 \rangle &:=& \pm_1\sqrt{5k-5\mp_2 \sqrt{25-32k+16k^2}}(2k-5\pm_2 \sqrt{25-32k+16k^2})|k\rangle|\downarrow\downarrow\downarrow\rangle\\
    \nonumber &&+
    (1+2k\mp_2 \sqrt{25-32k+16k^2})\sqrt{3}\sqrt{k}|k-1\rangle\frac{1}{\sqrt{3}}(|\downarrow\downarrow\uparrow\rangle+|\downarrow\uparrow\downarrow\rangle+|\uparrow\downarrow\downarrow\rangle)\\
    \nonumber &&\mp_1 2\sqrt{5k-5\mp_2 \sqrt{25-32k+16k^2}}\sqrt{3}\sqrt{k-1}\sqrt{k}|k-2\rangle\frac{1}{\sqrt{3}}(|\downarrow\uparrow\uparrow\rangle+|\uparrow\downarrow\uparrow\rangle+|\uparrow\uparrow\downarrow\rangle)\\
    &&+6\sqrt{k-2}\sqrt{k-1}\sqrt{k}|k-3\rangle |\uparrow\uparrow\uparrow\rangle,
\end{eqnarray}
each with associated energy
\begin{equation}
    E_{k;\pm_1\pm_2}=k\omega_0 \mp_1 g_0 \sqrt{5k-5\mp_2\sqrt{16k^2-32k+25}}.
\end{equation}
We have introduced a shorthand notation via a subscript on the $\pm$ sign, such that $\pm_1,\pm_2$ are a pair of sign choices (and $\mp_1$ indicates that the opposite sign as $\pm_1$ is used, and likewise for $\mp_2$) which allows for a more compact expression for all four dressed states. The four perturbed energy values are not equally spaced, though they still come in oppositely signed pairs of equal magnitude. We show in a later section that the eigenvalues always come in oppositely signed pairs.

For the remaining two matrices with $j=\frac{1}{2}$ in the direct sum decomposition, we note that these systems are algebraically equivalent to the single spin model. This equivalence allows us to immediately write down the diagonalized states and perturbed energies:
\begin{align*}
|1/2, k; \pm\rangle_1 &:= \frac{1}{2}[|k-1\rangle[|\downarrow\uparrow\downarrow\rangle-|\uparrow\downarrow\downarrow\rangle]\pm|k-2\rangle [|\downarrow\uparrow\uparrow\rangle-|\uparrow\downarrow\uparrow\rangle]],\\
E_{k;\pm} &=k\omega_0 \pm g_0\sqrt{k-1}\\
|1/2,\pm\rangle_2&:= \frac{1}{2\sqrt{3}}[|k-1\rangle[2|\downarrow\downarrow\uparrow\rangle-|\uparrow\downarrow\downarrow\rangle-|\downarrow\uparrow\downarrow\rangle]\pm |k-2\rangle [|\uparrow\downarrow\uparrow\rangle+|\downarrow\uparrow\uparrow\rangle-2|\uparrow\uparrow\downarrow\rangle]],\\
E_{k;\pm} &= k\omega_0 \pm g_0\sqrt{k-1}.
\end{align*}
The subscript on the kets in the above equations indicate the arbitrarily chosen degeneracy label of that subspace. This provides the full spectrum and dressed states for $k\geq 3$. Figure \ref{fig:TC_hybrid} illustrates the energy level diagram of the $N=3$ example, showing hybridization for the $0\leq k \leq 3$ subspaces, as well as collective dipole allowed transitions between dressed states.

\begin{figure}
    \centering
    \includegraphics{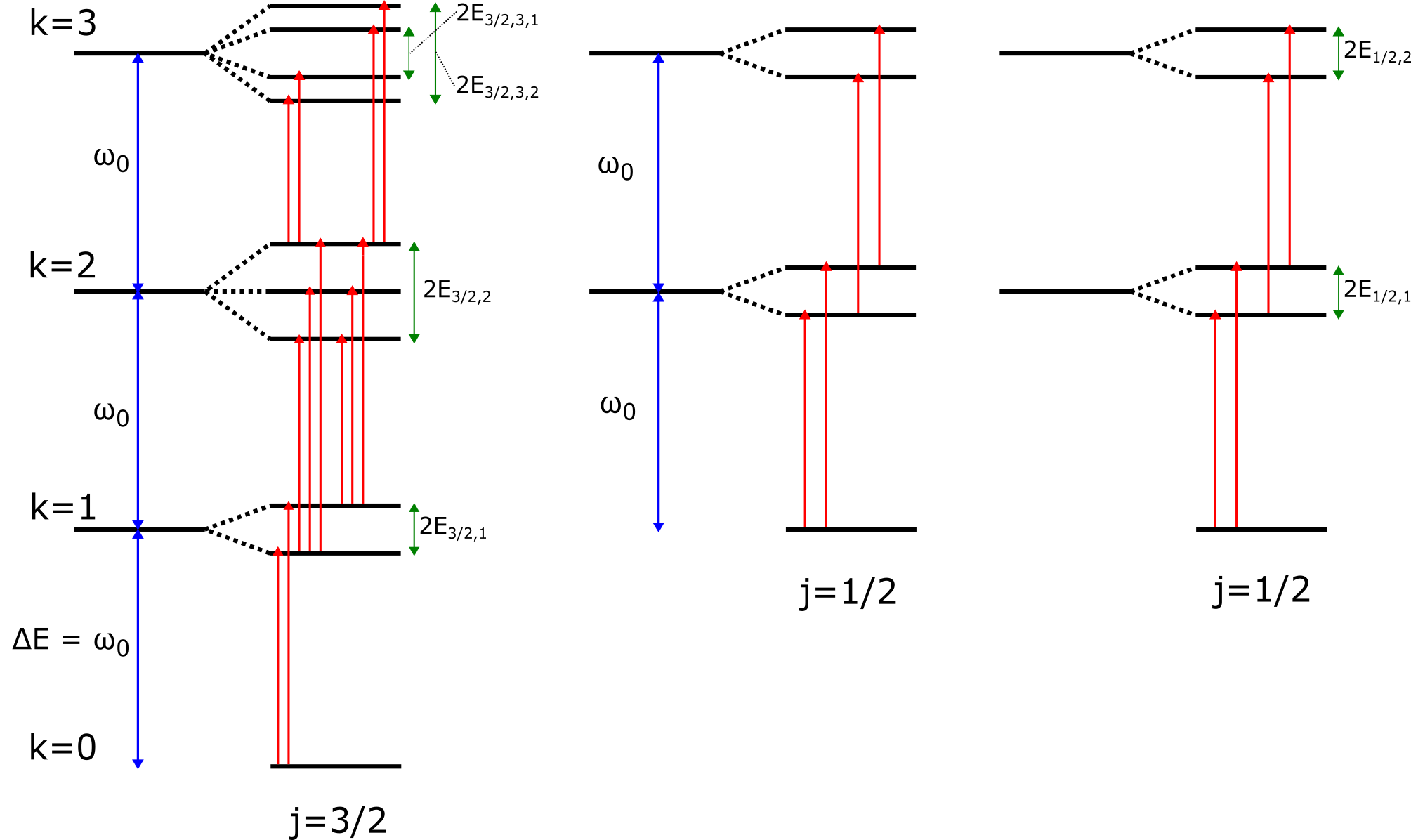}
    \caption{Illustration of the resulting hybridization of energy levels in the Tavis--Cummings model for $N=3$, explicitly on resonance such that $\omega_0 = \omega_s  = \omega_c$. Vertical single arrow lines (red) indicate transitions mediated by $\hat{J}_+$, meaning that the eigenstates represented by the horizontal bars have a non-zero $\hat{J}_+$ matrix element. Transitions are all--to--all between neighboring excitation subspaces of the same angular momentum, with some transitions between the $k=2$ and $k=3$ subspaces omitted for clarity. Note that there are no allowed transitions via collective spin or photon operators between distinct angular momentum subspaces, regardless of the value of $j$. Separation between excitation spaces is a constant $\omega_0$, denoted by bidirectional arrows (blue) between the pre--hybridized angular momentum states. Lamb shift splittings are denoted be bidirectional arrows (green) to the right of the hybridized states. In the $j=1/2$ subspaces, these splittings are given by $E_{1/2,k} = g_0 \sqrt{k}$. In the $j=3/2$ subspace, the Lamb shifts are given by: $E_{3/2,1} = g_0 \sqrt{3} \approx 1.73 g_0$, $E_{3/2,2} = g_0 \sqrt{10} \approx 3.16 g_0$, $E_{3/2,3,1} = g_0\sqrt{10 - \sqrt{73}} \approx 1.21 g_0$, and $E_{3/2,3,2} = g_0\sqrt{10 + \sqrt{73}} \approx 4.31 g_0$.}
    \label{fig:TC_hybrid}
\end{figure}
\end{widetext}

Before we move to the general case, we remark on a few well-known aspects of the solutions provided for $N=1,3$. Firstly, through the direct sum decomposition we see that the model decomposes into subspaces which are disconnected under the interaction portion of the total Hamiltonian. Secondly, through this decomposition, computing the dressed states and their energies, while non-trivial, is still more efficient than it would have been to diagonalize an $8\times 8$ matrix. Thirdly, as we will explore in greater depth, the coupling matrices for degenerate subspaces are identical, a reflection of the fact that they are indistinguishable under collective operations.

The complexity of computing the eigendecomposition analytically increases rapidly. To our knowledge one can only solve up to $N=8$ spin systems exactly; beyond this point the characteristic polynomial's degree for the largest space is beyond the size where general polynomial solutions exist. Once $N=9$ the largest decomposed matrix will have dimensions $10\times 10$, and since roots always come in positive-negative pairs, the simplified characteristic polynomial will have degree five, which will not generally have a formula for finding the roots.

As we will show in the following section, the problem of diagonalizing the Tavis--Cummings problem is equivalent to diagonalizing a particular two parameter family of Jacobi operators, which we will define as $L(j,k)$. Any real symmetric matrix can be written in a basis where it satisfies the Jacobi operator conditions via a similarity transformation \cite{rutishauser1966jacobi}. Then, if a Jacobi operator was generally solvable in a closed form, all real symmetric matrices` characteristic polynomials would also be solvable in a closed form, a contradiction to Galois's insolubility of general polynomials of degree 5 or greater. 

There is a deep connection of Jacobi operators to the study of orthogonal polynomials, which can in part be seen by the determinant recurrence formula of equation \eqref{eq.det} \cite{teschl2000jacobi}. It is outside the scope of this work to attempt a study of the generated orthogonal polynomials of the Jacobi operators representing the Tavis--Cummings Hamiltonian. As of yet we have been unable to solve the recurrence relationship for the eigenvalues in a closed form. That being said, we suspect that a proof (or disproof) for the existence of a closed form diagonalization of the TC Hamiltonian will be found not with the tools of linear algebra, but with polynomial techniques.
\section{Structure and Statistics of the Full Hamiltonian} \label{sec:results}

Here, we illustrate that a single subspace approximation of the Hamiltonian is generally insufficient to capture the full dynamics of the TC Hamiltonian, regardless of the chosen value of $j$, and provide a more accurate technique for analyzing the TC Hamiltonian. To do so, we first investigate the degeneracy of angular momentum subspaces as a function of $j$, with a focus on determining the maximally degenerate subspace. We then turn our attention to extracting as much information as possible from the collective Lamb shift coupling matrices without numerically solving an eigenvalue problem. As we will show, determining the descriptive statistics of the energy shifts of a given $(j,k)$ subspace is tractable theoretically. Appealing to computational mathematics, we are then able to join the two discussions in order to provide a picture of the degeneracy averaged collective Lamb shift as a function of $N$ and $k$, across all subspaces. Finally, motivated by the numerical results, we determine the root mean square Lamb shift averaged over the degeneracies across all angular momentum subspaces.

\subsection{Maximally Degenerate Angular Momentum Space}

The Dicke subspace is often considered ``special'', in that the following properties hold: all contained states are completely symmetric under particle exchange, the subspace has the largest dimension for a given $N$, the subspace contains the ground state of the Hamiltonian, and the subspace has no degenerate copies. Concerns about the validity of restricting the dynamics to within the Dicke subspace have been noted \cite{baragiola2010collective, wesenberg2002mixed} and we expand on that work here.

The maximally degenerate collective angular momentum subspace for $N$ spin--1/2 particles, which we denote as  $j^*$, is given by:
\begin{equation}\label{jstar}
    j^*=\frac{\sqrt{N}}{2}-\frac{1}{2}+\frac{1}{6\sqrt{N}}+O(N^{-1}).
\end{equation}
The maximally degenerate space is increasingly separated from the Dicke space as $N$ increases. Notice also that the value of $j^*$ does not approach $0$ or $1/2$, indicating that large $N$ structure, through the lens of degeneracy, is not well approximated by a single spin with angular momentum $j=N/2$, nor one with small angular momentum, such as $j=1/2$.

Given that the maximally degenerate angular momentum subspace is well approximated by this expression for $j^*$, we must determine how well this subspace represents the entire system. To formalize this notion, consider $d_{j^*}$, the degeneracy of the $j^*$ subspace, and $d_{j^*+1}$, the degeneracy of the $j^*+1$ subspace. Then, we have that
\begin{equation}\label{adjratio}
    \frac{d_{j^*}}{d_{j^*+1}}=1+O(N^{-3/2}),
\end{equation}
indicating the maximally degenerate subspace is not significantly more degenerate than its nearest neighbor. This argument may be extended to show that generally there's not a large difference in the ratio of nearby spaces. This implies that, although $j^*$ is the most degenerate subspace, we cannot reasonably approximate the system by just this angular momentum space. Within the context of degeneracy--weighted observables, of which an observable for thermal states of the TC Hamiltonian would be, there is no single subspace which can accurately mimic the structure of the entire Hamiltonian.

 To represent a large majority of the possible angular momentum states, we must also include many neighbors of $j^*$ in our analysis. The most essential collection of angular momentum subspaces of the TC Hamiltonian for a given $N$ is well quantified by the strong support of $d_j$. This is visualized in figure \ref{fig:degeneracy_1000}.

\begin{figure}[t]
    \centering
    \includegraphics[scale=0.4]{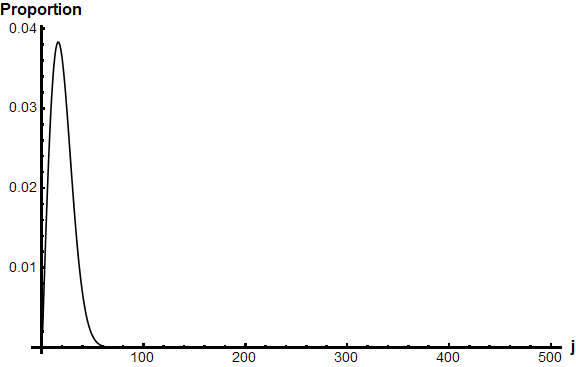}
    \caption{Normalized plot of $d_j$ as a function of $j$, for $N=1000$ spin-1/2 particles. The maximally degenerate space is the $j=15$ angular momentum space. This plot clearly indicates that when weighted by degeneracy, the Dicke subspace contributes negligibly as compared to lower $j$ angular momentum subspaces.}
    \label{fig:degeneracy_1000}
\end{figure}

The strong support of $d_j$ is approximately given by the interval $0 \leq j \leq O(\sqrt{N})$, for all allowed values of $j$. That is, almost all of the states are contained in the subspaces below some constant multiple of $\sqrt{N}$, where the constant is, of course, independent of $N$. The $O(\sqrt{N})$ upper limit can be derived by considering the ratios of the degeneracies of increasingly separated angular momentum subspaces (see appendix for further details). 


The insights provided by the computation of $j^*$ and determination of the strong support of $d_j$ have a few important implications. Firstly, the fact that the system must be represented by $O(\sqrt{N})$ subspaces is of interest to those working in the area of the complexity of quantum systems. Secondly, this also will be of interest to those simulating quantum systems admitting a similar angular momentum subspace decomposition, in that so long as one has the subspace structure being preserved, any observable that grows sub--exponentially in $j$, as suggested by (\ref{expsupp}), can be sufficiently modelled using this region of strong support. By restricting computations to this region, we can expect a halving of the dominant order of the computational cost (i.e. an $O(N^4)$ algorithm can be well approximated by an $O(N^2)$ algorithm). In fact, this reduction of order further reduces the effective spin dimensionality of the problem from $O(N^2)$ spin states, to $O(N)$ spin sates, a reduction from the original dimension of $2^N$.

As of this point, we set aside this result and move on to discussing some of the properties that can be gleaned from the coupling matrices as functions of $(j,k)$. In later sections, we average these results across all angular momentum subspaces, using the knowledge gained from this section, to provide an aggregate picture of the energy level structure as a function of $k$ excitations.

\subsection{Statistics of a Collective Angular Momentum Subspace}

In light of the subspace decomposition of the TC Hamiltonian,
\begin{equation}
    \H \cong \bigoplus_{j,k}\big( \omega_0 k \openone_{j,k} +  g_0 L(j,k)\big),
\end{equation}
it is clear that if one were able to diagonalize $L(j,k)$, then the Hamiltonian would be fully solved. It is instructive to visualize the representation of $L(j,k)$ with respect to $\mathcal{B}_{j,k}$:
\begin{equation}
   \begin{bmatrix}
    0 & l_1(j,k) &   & &  \\
    l_1(j,k)  & 0       & l_2(j,k)   &   &   \\
      &  l_2(j,k)  & 0      &  \ddots &   \\
      &         &     \ddots      & \ddots  &   l_{n}(j,k)\\
      &         &           & l_{n}(j,k)   & 0
    \end{bmatrix}.
\end{equation}
Thus, the Lamb shift coupling matrix can be naturally represented as a hollow tridiagonal matrix, a highly structured sparse matrix. Further, this coupling matrix is analogous to an un--normalized transition matrix for a 1D random walk.

A full closed form diagonalization of this matrix is unlikely to exist, but there is still a good amount of information that can be extracted. As a first approach we can consider the problem from a numerical linear algebra perspective. It was shown in 2013 that the eigenvalues of this variety of matrix can be computed exactly (to within numerical precision) in $O(n_{j,k} \log n_{j,k})$ floating point operations, a speed-up over the unstructured problem \cite{coakley2013fast}. This algorithm can then be used to efficiently extract the Lamb shifts of a given $(j,k)$ space on demand, if desired. The eigenvector problem given an eigenvalue $\lambda$ is then solvable in $O(n_{j,k})$ floating point operations utilizing the Thomas algorithm \cite{thomas1949elliptic}. This must be done for each of the $n_{j,k}+1$ eigenvalues, and so while the cost of producing the set of eigenvalues is $O(n_{j,k} \log n_{j,k})$, the cost of producing the entire eigensystem is $O(n_{j,k}^2)$, where the cost is dominated by the eigenvector problem. We expect the numerical speedup of finding the eigenvalues to be useful for simulating this system's dynamics and computing state dependent quantities for states defined by classical mixtures across angular momentum subspace, thereby increasing the maximal value for $N$ that can be feasibly simulated on a classical processor.

We return now to considering properties we can analytically compute, or estimate, of the collection of the Lamb shifts. From work in theoretical numerics, it was shown that the eigenvalues of hollow tridiagonal matrices come in oppositely signed pairs \cite{watkins2005product}. That is, if $\lambda$ is an eigenvalue of $L(j,k)$, then so is $-\lambda$. The eigenvalue spectrum of the Lamb shift coupling matrix can be seen as a two parameter family of sets,
\begin{equation}
    \Lambda(j,k) = \lbrace \lambda \, | \, L(j,k) \bm{v} = \lambda \bm{v}, \bm{v} \neq \bm{0} \rbrace,
\end{equation}
and so from the work of \cite{watkins2005product}, $\Lambda(j,k)$ is an even set. The fact that the eigenvalues of these coupling matrices is a family of sets and not multi-sets is shown in \cite{barth1967calculation}. Thus, if $\abs{\Lambda(j,k)} = \abs{\mathcal{B}_{j,k}} = n_{j,k} + 1$ is odd, there must be exactly one eigenvalue with value $\lambda = 0$. 

A standard parameter of matrices to compute is the determinant, as this is equal to the product of the eigenvalues. There is a two step recursive formula for computing the determinant of a symmetric tridiagonal matrix, that can be used to compute the characteristic polynomial or simply compute the determinant of $L(j,k)$. Let $A$ be a symmetric tridiagonal (Jacobi) matrix with matrix elements,
\begin{align}
    \nonumber A = \sum_{\alpha = 1}^{n+1} a_{\alpha} \ketbra{\alpha}{\alpha} + \sum_{\alpha = 1}^{n} b_{\alpha}\big(&\ketbra{\alpha}{\alpha+1} \\
        &  + \ketbra{\alpha+1}{\alpha}\big),
\end{align}
and sub-matrices $A_\alpha$, formed by discarding all basis vectors with index greater than $\alpha$. Then,
\begin{align} \label{eq.det}
    \nonumber \det(A) &= \det(A_{n+2})\\
        &= a_{n+1} \det(A_{n+1}) - b_{n}^2 \det(A_{n}).
\end{align}

\begin{widetext}

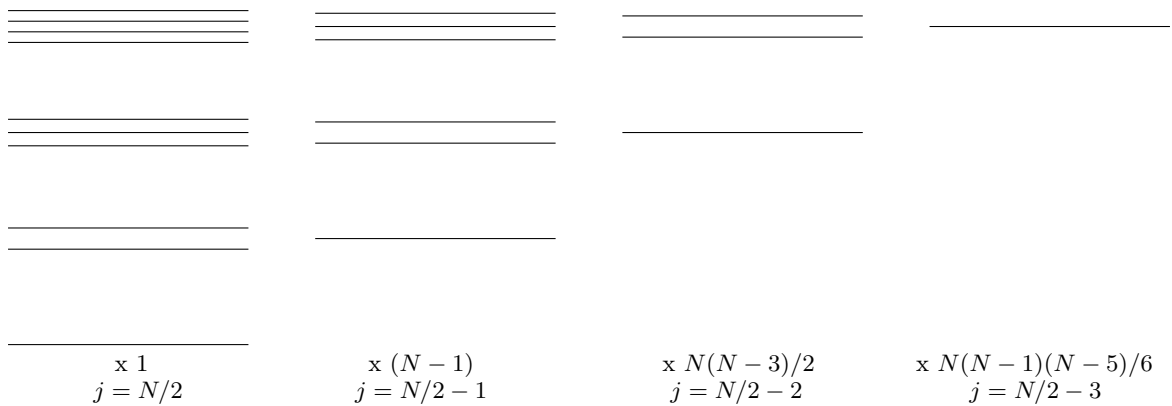
\begin{figure}[h]
    \centering
    \begin{picture}(435,150)
    
\put(32,0){$j=N/2$}
\put(130,0){$j=N/2-1$}
\put(248,0){$j=N/2-2$}
\put(360,0){$j=N/2-3$}

\put(40,10){x 1}
\put(135,10){x $(N-1)$}
\put(245,10){x $N(N-3)/2$}
\put(340,10){x $N(N-1)(N-5)/6$}

\put(0,20){\line(1,0){90}}

\put(0,56){\line(1,0){90}}
\put(0,64){\line(1,0){90}}

\put(0,95){\line(1,0){90}}
\put(0,100){\line(1,0){90}}
\put(0,105){\line(1,0){90}}

\put(0,134){\line(1,0){90}}
\put(0,138){\line(1,0){90}}
\put(0,142){\line(1,0){90}}
\put(0,146){\line(1,0){90}}

\put(115,60){\line(1,0){90}}

\put(115,96){\line(1,0){90}}
\put(115,104){\line(1,0){90}}

\put(115,135){\line(1,0){90}}
\put(115,140){\line(1,0){90}}
\put(115,145){\line(1,0){90}}

\put(230,100){\line(1,0){90}}

\put(230,136){\line(1,0){90}}
\put(230,144){\line(1,0){90}}

\put(345, 140){\line(1,0){90}}

\end{picture}

    \caption{Schematic representation of the energy eigenstates of the Tavis-Cummings Hamiltonian with excitations $0 \leq k \leq 3$ along the vertical, and labelled horizontally by the number of degeneracies of each angular momentum subspace.}
    \label{fig:TC_ladder}
\end{figure}

\end{widetext}

Upon computing the determinant of $L(j,k)$, we find that if $n_{j,k}+1$ is odd, then the recurrence terminates with $\det A_0 = 0$, and so $\det L(j,k) = 0$. Otherwise, $n_{j,k}+1$ is even and the determinant is given as
\begin{equation}
    \det L(j,k) =(-1)^{\frac{n+1}{2}} l_{n}^2 l_{n-2}^2 \cdots l_1^2,
\end{equation}
where the dependence of the matrix elements $l_\alpha$ on $(j,k)$ were suppressed for clarity.

While it is interesting to know that the determinant can be computed efficiently and in a closed form, it does not provide a description of the structure of the collective Lamb shifts. Rather, given the set of Lamb shift eigenvalues, $\Lambda(j,k)$, it is more instructive to provide descriptive statistics. Given knowledge of the eigenvalues, the $t$-th moment is given as
\begin{equation}
    \avg{\Lambda(j,k)^t} = \frac{1}{\abs{\mathcal{B}_{j,k}}}\sum_{\lambda\in\Lambda(j,k)} \lambda^t.
\end{equation}
We can avoid computing the eigenvalues explicitly by noticing that the sum over eigenvalues is equivalent to the trace of the Lamb shift coupling matrix. Thus,
\begin{equation}
    \avg{\Lambda(j,k)^t} = \frac{1}{\abs{\mathcal{B}_{j,k}}} \Tr\big(L(j,k)^t\big).
\end{equation}

Then, it is clear that the mean of each subspace is 0, which follows as the Lamb shift coupling matrix is hollow,
    \begin{equation}
        \avg{\Lambda(j,k)} = 0,
    \end{equation}
as all the diagonal entries of $L(j,k)$ are zero with sum zero. This statement can be extended to all odd moments of the Lamb shift eigenvalues. That is, for each coupling matrix, $L(j,k)$, 
\begin{equation}
    \avg{\Lambda(j,k)^{2t+1}}=0,\,\,\forall t \in \mathbb{N}.
\end{equation}
This follows immediately from the fact that, for every $\lambda\in\Lambda(j,k)$, $-\lambda\in\Lambda(j,k)$.

In order to quantify the magnitude of the collective Lamb shift splittings, we may utilize the variance as a measure, which in this case is equal to the second moment of $\Lambda(j,k)$:
\begin{align}
    \nonumber \operatorname{Var}(\Lambda(j,k)) &= \avg{\Lambda(j,k)^2} - \avg{\Lambda(j,k)}^2\\
    &= \avg{\Lambda(j,k)^2}.
\end{align}
Computing the variance is then equivalent to determining the trace of the square of the coupling matrix, which is a banded pentadiagonal matrix, explicitly given as
\begin{equation}
    \begin{bmatrix} 
    l_1^2 & 0 & l_1 l_2  &   &  & \\
    0 & l_1^2 + l_2^2 & 0 & l_2 l_3 & & \\
    l_1 l_2 & 0 & \ddots & \ddots & \ddots &\\
     &  l_2 l_3 & \ddots & \ddots & \ddots & l_{n-1}l_{n}&\\
     & & \ddots & \ddots & l_{n-1}^2 + l_{n}^2&0 &\\
     & & & l_{n-1}l_{n} & 0 & l_{n}^2
    \end{bmatrix}
\end{equation}

Thus, the trace of the square of $L(j,k)$ has a tidy closed form expression in terms of the matrix elements $l_\alpha(j,k)$,
\begin{equation}
    \Tr L(j,k)^2 = 2\sum_{\alpha=1}^{n} l_\alpha (j,k)^2.
\end{equation}
Then, the variance of $\Lambda(j,k)$, for $k \geq k_0(j)$, with $k' = k-k_0(j)$, is given by the following expression
\begin{align}\label{eq:variance}
    \nonumber \operatorname{Var}(\Lambda(j,k)) &= \frac{1}{2}\abs{\mathcal{B}_{j,k}}^3 -     \frac{1}{3}\abs{\mathcal{B}_{j,k}}^2(2k'+4j+7)\\
    \nonumber &+ \abs{\mathcal{B}_{j,k}}(2jk'+2k'+4j+7/2)\\
    &- \frac{1}{3}(6jk' + 8j + 4k' + 5)
\end{align}

 Recalling that the dimension of the basis of a $(j,k)$ space is given by $\abs{\mathcal{B}_{j,k}} = \min \lbrace 2j+1 , k - k_0(j) + 1\rbrace$, when $k$ satisfies $k-k_0(j) > 2j$ the dimension of the space becomes fixed at $2j+1$. And so, for $k$ such that $k-k_0(j) > 2j$, or equivalently such that $k > N/2 + j$, $\operatorname{Var}(\Lambda(j,k))$ is a linear function in $k$. This can be seen by substituting $\abs{\mathcal{B}_{j,k}} = 2j+1$ into equation \eqref{eq:variance}.

Taking the square root of the variance, we can attain the standard deviation, which has an interpretation as the average distance from the mean. In this sense then, the average collective Lamb shift, treating $j$ as a constant, is $O(\sqrt{k})$ for $k>N/2+j$. In order to describe the full statistics of the collective Lamb shift, it is insufficient to consider $j$ a constant, rather we must consider all angular momentum subspaces and their respective degeneracies present at a given value of $k$.

\subsection{Rotating--Wave Approximation Revisited}\label{rwa}

Before we move to averaging over degeneracies, we include one more aspect of the Lamb shifts that may be computed generally and discuss it's implications. Since all $L(j,k)$ are non-negative matrices we can bound the maximal absolute value of the eigenvalue, also given by the spectral norm, from above and below using the Perron--Frobenius theorem:
\begin{equation}
    \min_m \sum_n [L(j,k)]_{mn}\leq \max \Lambda\leq \max_m \sum_n [L(j,k)]_{mn}.
\end{equation}
Applying this theorem allows us to determine that $\max \Lambda(j,k)$ is upper bounded by the various cases illustrated in \eqref{eq:lambda_max_cases}.
\begin{equation}\label{eq:lambda_max_cases}
    \begin{cases}
        \frac{2}{\sqrt{3}}\sqrt{(2j+k')jk'} & \text{generally}\\
        2[j\sqrt{k'}-\frac{1}{2}\frac{j^2}{\sqrt{k'}}+\frac{1}{8}\frac{j^4}{(k')^{5/2}}+O(\frac{j^5}{(k')^{7/2}})] & 2j\ll k'\\
        2[\frac{1}{\sqrt{2}}k'\sqrt{j}-\frac{1}{8\sqrt{2}}\frac{(k')^2}{\sqrt{j}}+\frac{1}{512}\frac{(k')^4}{j^{5/2}}+O(\frac{(k')^5}{j^{7/2}})] & k'\ll 2j.
    \end{cases}
\end{equation}

The relations for $\max \Lambda(j,k)$ can narrow the energy range we need to consider in experimental design for a given value of $N$ and bounded total energy. We could likewise provide a lower bound on the maximal splitting via the same argument--this always yields the minimum of the first row and the last row (both of which have a single entry in the coupling matrix). 

These bounds on the maximal eigenvalue also have an additional implication. As a general rule of thumb, the rotating--wave approximation used for approximating the Dicke Hamiltonian by the Tavis--Cummings Hamiltonian is said to be valid for $g_0\sqrt{N}\ll \omega_0$. Although this rule of thumb is useful for potentially determining the ability to experimentally resolve vacuum Rabi oscillations of a spin ensemble with a cavity, it is not a good metric for determining the validity of the RWA. We can improve the specificity of this requirement. We begin by noting that from the lower bound for the maximal eigenvalue, we have:
\begin{equation}
    \lim_{j,k\rightarrow\infty} \|L(j,k)\|_\infty=\infty,
\end{equation}
which means that eventually $g_0\max \Lambda(j,k)$ will approach and exceed $2\omega_0$. This means that the true condition that should be used to justify a rotating--wave approximation is 
\begin{equation}
g_0 \max \Lambda(j,k)\ll \omega_0,
\end{equation}
which puts a limit on the size of $j$ and $k$ that can be considered with this model. A maximally allowed value for $k$, after which the rotating wave approximation breaks down, is not a property unique to the TC Hamiltonian, as the JC Hamiltonian's RWA is invalidated when $k \approx \omega_0^2 / g_0^2$. Given our upper and lower bounds on $\max\Lambda(j,k)$ we can estimate where the rotating--wave approximation begins to breakdown. As an example, we consider the behavior of the density of states for an $N=20$ system.

The density of states is a sum of delta functions over all excitation spaces, with the location of the delta functions being the energies of the Lamb-shifted eigenstates, scaled by the weight:
\begin{equation}
    n(E) = \sum_{k=0}^\infty\sum_{\lambda\in\Lambda(k)}w_k(\lambda)\delta(E-(k\omega_0 + \lambda g_0)).
\end{equation}
When the rotating wave approximation holds, the distribution of delta functions across neighboring excitation subspaces will be well separated, as show in figure \ref{fig:DOS_sep}. On the other hand, figure \ref{fig:DOS_overlap} illustrates what the energy level structure looks like when the rotating wave approximation breaks down. In this case, states in a given excitation subspace can overlap with states from neighboring excitation subspaces, breaking the notion of the good quantum number, and invalidating the predictions of the model.

\begin{figure}[t]
    \centering
    \includegraphics[scale=0.4]{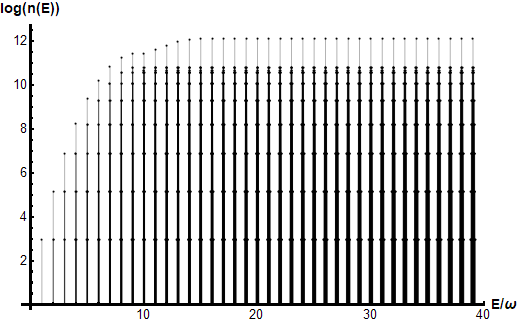}
    \caption{Scaled density of states for $N=20$ spins, with $\omega/g = 500$.}
    \label{fig:DOS_sep}
\end{figure}

\begin{figure}[h!]
    \centering
    \includegraphics[scale=0.4]{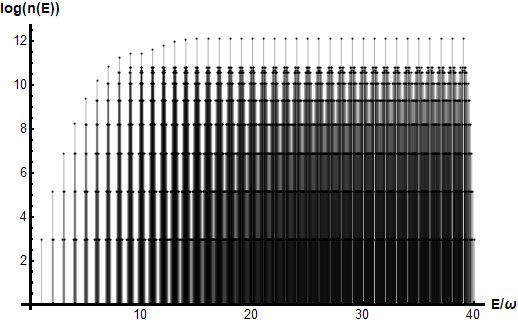}
    \caption{Scaled density of states for $N=20$ spins, with $\omega/g = 100$.}
    \label{fig:DOS_overlap}
\end{figure}

\subsection{Degeneracy Averaged Statistics of the Collective Lamb Shift}

Given that relaxation processes and thermal excitation tend to suppress collective behavior in an ensemble and spread population over many subspaces \cite{wood_cavity_2016,baragiola2010collective,wesenberg2002mixed,chase_collective_2008}, the utility of descriptive statistics of the collective Lamb shifts for specific values of $j$ is limited. To address this constraint, we now discuss properties of the collective Lamb shift upon taking an appropriate average over angular momentum subspaces. To begin, we define a probability distribution on the set of eigenvalues across all values of $j$ at a given value of $k$. A natural choice is weighting each eigenvalue by its degeneracy.
\begin{equation}
    w_k(\lambda) = \sum_{j} \begin{cases}
        d_j & \lambda \in \Lambda(j,k)\\
        0 & \text{else}
    \end{cases}
\end{equation}
The sum over $j$ accounts for the case of repeat eigenvalues across $j$ spaces, although we believe that it is generally only the 0 eigenvalue that repeats across $j$ spaces. For convenience, we also define the set of Lamb shift eigenvalues over $k$ excitations to be given as:
\begin{equation}
    \Lambda(k) = \bigcup_{j} \Lambda(j,k).
\end{equation}
The set of pairs, $(\lambda, \omega_k(\lambda))$ define an unnormalized probability distribution on the Lamb shifts for an $N$ spin TC system with $k$ excitations.

The $t$-th moment of the collective Lamb shifts across all angular momentum spaces is formally written,
\begin{equation}
    \avg{\Lambda(k)^t} = \frac{1}{D_k}\sum_{\lambda\in\Lambda(k)}w_k(\lambda)\lambda^t,
\end{equation}
where $D_k$ is the number of states with $k$ excitations, given by
\begin{equation}
    D_k = \sum_{k' = 0}^k \binom{N}{k'}.
\end{equation}
\newline

Recall that if $k<k_0(j)=N/2-j$, then there are no states of excitation $k$ for the given value of $j$. In this case, $\Lambda(j,k)$ is empty and does not contribute to the statistics of this excitation level.  Once $k \geq N$, the total number of states present at a given excitation becomes fixed at $D_k = 2^N$.

As with the case of a single angular momentum space, it is better to compute the moments utilizing traces of the Lamb shift coupling matrix, which are computable in $O(n_{j,k})$ floating point operations, compared to the $O(n_{j,k}\log n_{j,k})$ eigenvalue problem. The advantage is particularly significant for the second moment, where we already have determined a closed form expression for the trace, dropping the cost to $O(1)$ operations per angular momentum subspace.

Using this insight, the $t$-th moment of the collective Lamb shift splittings across all angular momentum spaces is equal to the weighted average of traces of the coupling matrices:
\begin{equation}\label{degenmoment}
    \avg{\Lambda(k)^t} = \frac{1}{D_k}\sum_{j} d_j \Tr (L(j,k)^t).
\end{equation}
As $\Lambda(k)$ is the union of even sets, it is also an even set. This fact allows us to immediately determine that the odd moments for the collective Lamb shifts indexed by $k$ excitations are all 0:
\begin{equation}
    \avg{\Lambda(k)^{2t+1}}=0,\,\,\forall t \in \mathbb{N}.
\end{equation}

Due to the combinatorial nature of the weights on eigenvalues, there is no exact closed form expression for the even moments of the Lamb shift splittings. That being said, it is computationally feasible to visualize the function for select values of $N$, as shown in figure \ref{fig:variance}. 

\begin{widetext}

\begin{figure}[h]
    \centering
    \includegraphics{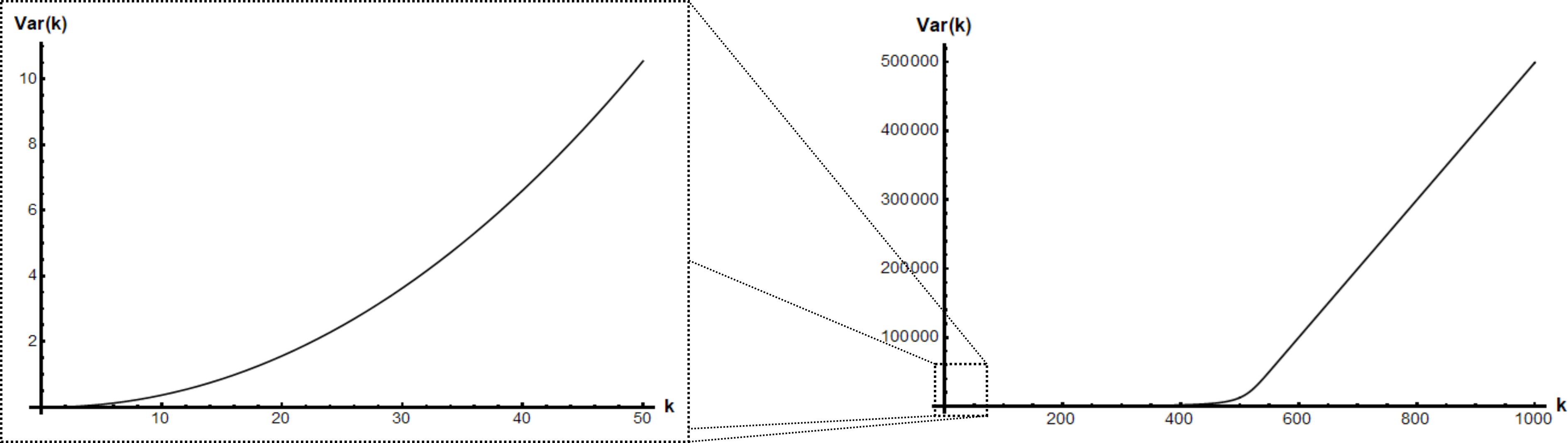}
    \caption{Variance of the unit--less ($\hbar=g_0=1$) Lamb shift splittings for $N=1000$ spin--1/2 particles. Notice that the variance becomes linear in $k$ soon after $k = N/2 = 500$.  Notice the non-linearity and reduction of scale of the variance in the lower excitation subspaces as compared to the $k>N/2$ subspaces.}
    \label{fig:variance}
\end{figure}

\end{widetext}

We first address the apparent suppression of the variance for $k<N/2$. Recalling from the partial energy level diagram of figure \ref{fig:TC_ladder}, as we introduce a new angular momentum subspace into the statistics, the ground state of that $j$ space is included into the statistics with energy splitting of 0. Now, since $d_j$ is an increasing function for $j<j^*$, the dominant element of the distribution of eigenvalues is the ground state of the smallest considered angular momentum subspace. This trend holds true until $k$ approaches $N/2-j^*$. In the case of $N=1000$, we predict $j^* = 15$, hence the suppression of the variance to nearly $k=N/2$.

As for the linearity of the variance starting at roughly $k=N/2$, we expect the variance of each angular momentum $j$ subspace to be linear in $k$ for values of $k>N/2+j$. It follows that the apparent transition into the linear regime at $k\approx N/2$ is caused by the variance of subspaces in the region of strong support of $d_j$ being most dominant in the linear regime for $k > N/2 + j^*$. Thus, one can expect a linear variance in $k$ for values of $k > N/2 + \sqrt{N}/2 - 1/2 + 1/(6\sqrt{N})$, which is dominated by $N/2$ for large $N$.

Now, while we showed that the variance is indeed a linear function in $k$, we did not provide the function, as the slope of this linearity is a sizeable polynomial in $j$. We will return to the task of averaging this polynomial over all possible values of $j$ momentarily. As a first investigation, we can efficiently illustrate the trend numerically. We perform regression on the linear regime of the variance for select values of $N$, and plot the slope of these lines as a function of $N$, as seen in figure \ref{fig:slope_regression}.

\begin{figure}[h!]
    \centering
    \includegraphics[scale=0.4]{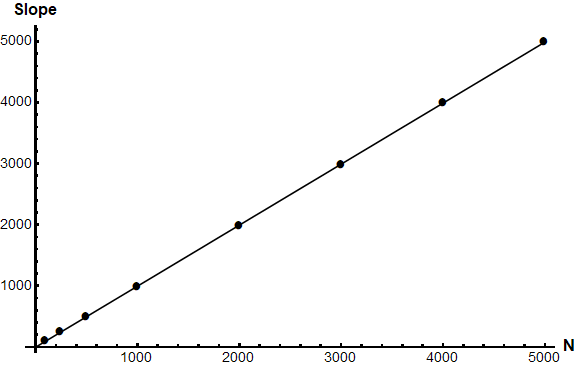}
    \caption{Slope of the variance of the collective Lamb shift splittings in the linear regime, for various $N$. Points mark computed values of the slope for each $N$. The regression model is Slope$(N) = 0.9989 N - 0.27$, with an $R^2$ value of nearly 1.}
    \label{fig:slope_regression}
\end{figure}
  
The result of the regression indicates that for the values of $k$ and $N$ considered, the variance is effectively growing at a rate of $0.9989 N k$. In fact, we are able to show analytically that the variance of the collective Lamb shift splittings, as a function of $k$ excitations, grows as the product of $N$ and $k$ for $k \geq N$, such that
\begin{equation}\label{degenavg}
    \operatorname{Var}(\Lambda(k)) = O(N k).
\end{equation}
We further conjecture that this result holds for all $k > k^*$, where $k^*$ is some value in the range $N/2 < k^* < N$, and likely depends on $N$. This can be seen in part by noting that the dimension of a subspace, $\abs{\mathcal{B}_{j,k}}$, saturates at $2j+1$ when $k = k_0(j) + 2j = N/2 + j$. Further, since we need only consider the strong support of $d_j$ when averaging (for more details see the appendix), the result could likely be extended to hold if $k^* = N/2 + O(\sqrt{N})$. In the region $k^* < k < N$, the statistics are not exposed to all possible spin states as $D_k < 2^N$. This issue can be avoided by bounding $d_j / D_k$ instead of $d_j/2^N$, for all values of $k$ greater than some $k^*$. Since $D_k$ is almost constant on this region, the extension should not be too difficult. The proof of equation \eqref{degenavg} is given in the appendix, for the case where $k \geq N$. 

\section{Discussion} \label{sec:discussion}
    It is a common practice to treat the Tavis--Cummings Hamiltonian as a generalized Jaynes--Cummings Hamiltonian operating in the Dicke subspace ($j=N/2$) with a collectively enhanced spin-cavity coupling, $g_{eff} = g_0 \sqrt{N}$. The Lamb shifts in higher excitation manifolds, then, scale in analogy to the Jaynes--Cummings model, $g_0\sqrt{k}\rightarrow g_{eff}\sqrt{k} = g_0\sqrt{Nk}$. The justification for this approximation is often taken to be operation at low excitation, with the limiting case being the single-excitation manifold ($k=1$) where the Lamb shift is simply $g_0\sqrt{N}$. It is perhaps surprising, then, that the average Lamb-shifted energy level splitting of the Tavis--Cummings Hamiltonian, taken over all angular momentum subspaces, has a magnitude approximately given by $g_0 \sqrt{N k}$. In this sense, the dynamics of a large ensemble appear similar for very low excitations ($k \sim 1$) and very high excitations ($k \gg N/2$). Mathematically, this is due to the structure being dominated by the subspaces nearest to the maximally degenerate subspace, as given in the proof of equation \eqref{degenavg}. In fact, only the lowest $O(\sqrt{N})$ subspaces are theoretically needed to produce the result. 
    
    We note the consistency of our result with various experiments measuring a high--cooperativity splitting of spin ensembles interacting with high Q cavities, where a $g_0 \sqrt{N}$ behavior is observed \cite{rose_coherent_2017,angerer_superradiant_2018,kubo_strong_2010,schuster_high-cooperativity_2010}. These experiments are generally run at relatively high power, corresponding to many excitations in the system. Further, it has been noted that the high--cooperativity splitting disappears at high drive powers \cite{angerer_ultralong_2017,chiorescu2010magnetic}. Our results indicate that the coalescence of the splitting into a single peak may be considered as a violation of the rotating-wave approximation (Section \ref{rwa}), such that the eigenstructure of the Tavis--Cummings model becomes invalid, and ``classical''
    behavior emerges due to the smearing of the density of states (Figure \ref{fig:DOS_overlap}). 
    
    We also note that care must be taken in applying this result, however, as the linearity of the variance in $k$ is invalid for $k<N/2$, as illustrated in figure \ref{fig:variance}. This indicates that, while the single excitation splitting prediction of $O(\sqrt{N})$ is indeed valid for states with enough energy under the right conditions, for non--trivial moderate excitation states the variance cannot be approximated as simply. The full implications of this result are outside the scope of this paper, but we note our results suggest a further examination of the validity of various subspace restriction techniques is required in the regime of low to moderate excitation and ensemble size used for many quantum devices.

\section{Conclusion} \label{sec:conclusion}
   
    In this work we have revisited the Tavis--Cummings model and have elucidated a number of new observations. We began by recasting the original decomposition of the Hamiltonian for this system as a direct sum of subspaces, with respective structure described by a two-parameter family of coupling matrices. We further analyzed the structure of the degeneracies within the Hamiltonian and found the system is well described by a (relatively) small subset of the possible angular momentum spaces. This identification alone has implications to fields such as the complexity and simulation of quantum systems, as well as direct applications to the latter parts of our work.
    
    We proceeded to describe various parameters of our coupling matrices that we can compute and what those computed values could imply. First, we showed that, due to the structure of these coupling matrices, finding the eigenvalues is computationally easier than for general matrices. Additionally, we showed that all odd moments of the distribution of the eigenvalues are zero and computed the second moment explicitly. Using the derived formulae of this paper, one could also compute the higher order even moments, both within each subspace but also averaging across all degeneracies. The utility of such computations is dependent on the feasibility of experimentally detecting these higher moments. From this we provided bounds on the maximal value in the Lamb shift collection for each $(j,k)$ subspace. The bounds provided can likely be tightened; however, the current bounds still provide a strong condition for determining the validity of the Tavis--Cummings rotating--wave approximation. 
    
    
    Following this, we showed how one would average over the $j$ portion of the computed statistics of the Lamb shifts on the coupling matrices by averaging over the angular momentum, using degeneracy as the weighting function. We bounded the RMS of the splittings for $k\geq N$, and showed good agreement with numerics. A natural question is whether this can be shown for some $k^*$ where $N/2<k^*<N$, or even for lower values of $k$. We believe that such a $k^*$ exists, and provided a sketch of what a proof for this extension might look like.
    
    
\section*{Acknowledgements} \label{sec:acknowledgements}

We thank Maryam Mirkamali for helpful discussions on mesoscopic physics and multi--body entanglement, and John Watrous for inspiring us to reexamine this problem using symmetry techniques, which gave us the tools to better understand the structure of the Tavis--Cummings model.

\section*{Funding}
This work was supported by Industry Canada, the Canada
First Research Excellence Fund (CFREF), the Canadian Excellence Research Chairs (CERC 215284) program, the Natural Sciences and Engineering Research Council of Canada
(NSERC RGPIN-418579) Discovery program, the Canadian
Institute for Advanced Research (CIFAR), and the Province
of Ontario.

\bibliographystyle{unsrt}
\bibliography{refs}
\clearpage

\appendix*
\begin{widetext}
\section{Appendix} \label{sec:appendix}

\begin{proof}[Proof of equation (\ref{jstar})]
To derive the value of $j^*$, we begin by defining the degeneracy function in a convenient form,
\begin{equation}\label{eq.choose}
    f(j)=\frac{2j+1}{N/2+j+1} {N \choose N/2+j},
\end{equation}
where $j$ takes integer or half-integer values $0 \leq j \leq N/2$, depending on the parity of $N$. To prepare for differentiation, the binomial coefficient can be extended to a continuous function 
\begin{equation}
    {N \choose K} = \frac{\Gamma(N+1)}{\Gamma(K+1)\Gamma(N-K+1)},
\end{equation}
such that \eqref{eq.choose} can be written as a continuous function in $j$
\begin{equation}
    f(j) = \frac{2j+1}{N/2+j+1}  \frac{\Gamma(N+1)}{\Gamma(N/2 + j+1)\Gamma(N/2 - j + 1)}.
\end{equation}

We may now differentiate and look for critical values:

\begin{equation}\label{eq:diff_degen}
    \frac{d}{dj}f(j)=4{N\choose N/2+j}\frac{\frac{1}{2}(2j+1)(N+2j+2)(H_{N/2-j}-H_{N/2+j})+N+1}{(N+2j+2)^2}.
\end{equation}

In equation \eqref{eq:diff_degen}, $H_x$ is the Harmonic series truncated at term $x$. The degeneracy is then maximal when
\begin{equation}
    \frac{1}{2}(2j+1)(N+2j+2)(H_{N/2-j}-H_{N/2+j})+N+1=0.
\end{equation}
We can re-cast this result by utilizing the expression for $H_x=\log x+\gamma+O(x^{-1})$, where $\gamma$ is the Euler-Mascheroni constant ($\gamma\approx 0.577$). This allows us to take the difference of the Harmonic numbers as $\log$'s, which cancels the additive term. After simplifying, we are left with
\begin{equation}
    \frac{1}{2}(2j+1)(N+2j+2)\log(\frac{N/2-j}{N/2+j})+N+1=0.
\end{equation}

This is a very tight approximation. If we examine the series expansions of this, we see that taking $j\approx \frac{\sqrt{N}}{2}$ will remove the leading error. We may repeat this procedure, noting that the errors are a Laurent series in $\sqrt{N}$, so we adjust by decreasing powers of $\sqrt{N}$ corrections. Using this, we take as a guess that $j=\frac{\sqrt{N}-1}{2}+\frac{1}{6\sqrt{N}}$, which yields:
\begin{equation}\label{eq:plug_in_guess}
    (3N+1)(3(N^{3/2}+N+\sqrt{N})+1)\frac{\log(\frac{-6N+6\sqrt{N}-2}{3\sqrt{N}(N+\sqrt{N}-1)+1}+1)}{18N}+N+1.
\end{equation}
When expanded as a series in the limit of large $N$, we have:
\begin{equation}
    \frac{1}{\sqrt{N}}+O(\frac{1}{N}),
\end{equation}
and thus our guess is equal to the true root in the limit of $N\longrightarrow\infty$. Thus,
\begin{equation}
    j^*= \frac{\sqrt{N}-1}{2}+\frac{1}{6\sqrt{N}} + O(N^{-1})
\end{equation}
is the collective spin space with the largest degeneracy, up to error $O(N^{-1})$.
\end{proof}

\begin{proof}[Proof of equation (\ref{adjratio})]
We begin by considering the ratio:
\begin{eqnarray}
    \frac{d_{j^*}}{d_{j^*+1}}&=& \frac{N! (2j^*+1)}{(N/2 - j^*)!(N/2+j^*+1)!}\cdot \frac{(N/2 - j^*-1)!(N/2+j^*+2)!}{N! (2j^*+3)}\\
    &=& \frac{2j^*+1}{2j^*+3}\cdot \frac{N/2+j^*+2}{N/2-j^*}\\
    &=& (1-\frac{2}{2j^*+3})\cdot \frac{1+\frac{2j^*}{N}+\frac{4}{N}}{1-\frac{2j^*}{N}}\\
    &=& (1-\frac{2}{2j^*+3})(1+\frac{2j^*}{N}+\frac{4}{N})(1+\frac{2j^*}{N}+(\frac{2j^*}{N})^2+O(((j^*)/N)^3))
\end{eqnarray}
where the last factor is a geometric series expansion with a ratio of $\frac{2j^*}{N}$. Grouping by powers, using $j^*=O(\sqrt{N})$, we have:
\begin{equation}
    1+[-\frac{2}{2j^*+3}+\frac{2j^*}{N}+\frac{2j^*}{N}]+[\frac{4}{N}-2\frac{4j^*}{(2j^*+3)N}+2(\frac{2j^*}{N})^2]+O(N^{-3/2})
\end{equation}
Now we utilize $j^*=\frac{\sqrt{N}}{2}-\frac{1}{2}+\frac{1}{6\sqrt{N}}$ to evaluate the above:
\begin{eqnarray}
    &=& 1+0+[-\frac{2}{N}+\frac{4}{N}-\frac{4\sqrt{N}}{\sqrt{N}N}+2(\frac{1}{\sqrt{N}})^2]+O(N^{-3/2})
\end{eqnarray}
So the ratio of the degeneracies is $1+O(N^{-3/2})$.
\end{proof}

\begin{proof}[Proof of Strong Support for $0\leq j\leq O(\sqrt{N})$]

Recall the computation of the relative population between the maximal angular momentum subspace and its neighbor, and consider the case when the leading term will contribute to the ratio of neighboring values of $j$. We found that when $\frac{2j}{N}\ll 1$,
\begin{equation}
    \frac{d_j}{d_{j+1}}=1+[\frac{4j}{N}-\frac{2}{2j+3}]+O(N^{-1}),
\end{equation}
which cancelled when $j=j^*$ since $j^*=\frac{\sqrt{N}}{2}+O(1)$. 

Suppose we still have $\frac{2j}{N}\ll 1$, but now we consider a subspace nearby the maximal angular momentum space, such that $j=j^*+\Omega(\sqrt{N})$. The ratio for this value of $j$ is then $1+\Omega(N^{-1/2})$. This ratio will remain valid for increasing values of $j$, so long as $\frac{2j}{N}\ll 1$.

To find the ratio of the degeneracies of the next nearest neighbors, we apply this procedure twice, finding 
\begin{equation}\label{expsupp}
    \frac{d_j}{d_{j+2}}=(1+\Omega(1/\sqrt{N}))^2.
\end{equation}
To continue this argument to further subspace degeneracies, listed within the set $\{d_j,d_{j+1},\ldots, d_{j+O(\sqrt{N})}\}$, we can write a geometric series taken at the infinity limit since the other term will contribute a much smaller portion to the sum. Thus, this series limits to
\begin{equation}
    \frac{1}{1-\frac{1}{1+\Omega(1/\sqrt{N})}}=O(\sqrt{N}).
\end{equation}
We have shown that while the total number of allowed values for $j$ is $O(N)$, the fractional contribution contained in this region is only $O(N^{-1/2})$. Thus we have that of the $2^N$ possible angular momentum states, \textit{most} of them are contained within the lowest, smallest values of $j$, $O(\sqrt{N})$ angular momentum subspaces.
\end{proof}

\begin{proof}[Proof of equation \eqref{eq:variance}]
We will make use of the following summation formulae,
\begin{align*}
    \sum_{i=1}^{n} i &= \frac{n(n+1)}{2}\\
    \sum_{i=1}^{n} i^2 &= \frac{n(n+1)(2n+1)}{6}\\
    \sum_{i=1}^{n} i^3 &= \frac{n^2(n+1)^2}{4}.
\end{align*}
Recall that the trace of the square of the coupling matrix can be written exactly as
\begin{equation}
     \Tr L(j,k)^2 = 2 \sum_{\alpha=1}^{n} l_\alpha(j,k)^2 = 2 \sum_{\alpha=1}^{n} \big(2\alpha j-\alpha(\alpha-1)\big)\big(k'-\alpha+1\big).
\end{equation}
Now, grouping the summand by orders of $\alpha$ we have
\begin{equation}
    l_\alpha^2(j,k) = \alpha^3 - \alpha^2(2j+1+k'+1) +\alpha(2j+1)(k'+1).
\end{equation}
And so
\begin{align}
    \nonumber \Tr L(j,k)^2 &= \frac{1}{2}\abs{\mathcal{B}_{j,k}}^4 - \frac{2}{3}\abs{\mathcal{B}_{j,k}}^3(2j+k'+7/2) + \abs{\mathcal{B}_{j,k}}^2(4j+2k'+2jk'+7/2)\\
    &- \frac{2}{3}\abs{\mathcal{B}_{j,k}}(3jk' + 4j + 2k + 5/2).
\end{align}
The variance is then, by definition,
\begin{align}
    \nonumber \operatorname{Var}(\Lambda(j,k)) &= \frac{1}{2}\abs{\mathcal{B}_{j,k}}^3 -\frac{1}{3}\abs{\mathcal{B}_{j,k}}^2(2k'+4j+7)\\
    &+ \abs{\mathcal{B}_{j,k}}(2jk'+2k'+4j+7/2) - \frac{1}{3}(6jk' + 8j + 4k' + 5).
\end{align}
\end{proof}

\begin{proof}[Proof of equation \eqref{eq:lambda_max_cases}]
We begin by transforming the entry values into a continuous function of $\alpha$ so that we may differentiate it. We will use Perron-Frobenius since we have a non-negative matrix and so can bound the maximal eigenvalue by the maximal row sum. To this end we focus on maximizing a single $l_\alpha(j,k)$ entry and double it since the true maximum will occur within one entry of the optimal continuous value choice for $\alpha$ and there are two entries in that row.

Differentiating this we have:
\begin{equation}
    \frac{d}{d\alpha}\left[ \sqrt{\alpha}\sqrt{2j+1-\alpha}\sqrt{k'-\alpha+1}\right] =\frac{3\alpha^2-4\alpha+2j(-2\alpha+k'+1)-2\alpha k'+k'+1}{2\sqrt{\alpha(\alpha-2j-1)(\alpha-k'-1)}}
\end{equation}
and so this is optimized when:
\begin{eqnarray}
    \alpha&=&\frac{1}{3}\left(2j+k'+2\pm \sqrt{4j^2-2jk'+2j+(k')^2+k'+1}\right)\\
    &=& \frac{1}{3}\left(2j+k'+2\pm \sqrt{(2j+\frac{1}{2})^2+(k'+\frac{1}{2})^2-2jk'+\frac{1}{2}}\right)
\end{eqnarray}
In the above we must exclude the positive sign choice since this results in $\alpha\geq \abs{\mathcal{B}_{j,k}}$, which is beyond the domain for $\alpha$. With the negative sign choice we note that $\alpha$ is linear in $j$ and $k'$ to first order, so we remove the 1 shifts in our objective function. With this, we have:
\begin{eqnarray}
    \sqrt{j^2-(\alpha-j)^2}\sqrt{k'-\alpha}&=&\sqrt{2j\alpha-\alpha^2}\sqrt{k'-\alpha}\\
    &=& \sqrt{\alpha}\sqrt{2j-\alpha}\sqrt{k'-\alpha}
\end{eqnarray}
Solving for the roots again using this simplified expression provides:
\begin{eqnarray}
    \alpha &=&\frac{1}{3}(2j+k'-\sqrt{4j^2-2jk'+(k')^2})+O(\sqrt{j}+\sqrt{k})\\
    &=&\frac{1}{3}(2j+k'-\sqrt{(2j+k')^2-6jk'})\\
    &=&\frac{1}{3}(2j+k'-(2j+k')\sqrt{1-\frac{6jk'}{(2j+k')^2}})\\
    &=& \frac{1}{3}(2j+k')(1-\sqrt{1-\frac{6jk'}{(2j+k')^2}})
\end{eqnarray}

Observe that $\max_{j,k'} \frac{6jk'}{(2j+k')^2}=\frac{3}{4}$ where $k'=2j$, and $\min_{j,k'}\frac{6jk'}{(2j+k')^2}=0$ when one is constant and the other approaches infinity. This means that:
\begin{equation}
    0< \alpha\leq \frac{1}{6}(2j+k')
\end{equation}
Putting this into our expression for the largest eigenvalue, being sure to include the factor of two due to there being a second entry, provides:
\begin{eqnarray}
    \max \Lambda(j,k) &<& 2 \sqrt{\frac{1}{6}(2j+k')}\sqrt{2j}\sqrt{k'}\\
    &=&\frac{2}{\sqrt{3}}\sqrt{(2j+k')jk'}+O(j^{3/4}+(k')^{3/4})
\end{eqnarray}
This expression is mostly relevant when $2j\approx k'$. We now move to the cases of $2j\ll k'$ and $k'\ll 2j$. Returning to our prior expression this is:
\begin{eqnarray}
    & & \sqrt{\frac{1}{3}(2j+k')(1-\sqrt{1-\frac{6jk'}{(2j+k')^2}})}\sqrt{2j-\frac{1}{3}(2j+k')(1-\sqrt{1-\frac{6jk'}{(2j+k')^2}})}\\
    & &\times \sqrt{k'-\frac{1}{3}(2j+k')(1-\sqrt{1-\frac{6jk'}{(2j+k')^2}})}\\
    &=& \sqrt{\frac{2}{27}(8j^3(\sqrt{1-\frac{6jk}{(2j+k)^2}}-1)+6j^2k+k^3(\sqrt{1-\frac{6jk}{(2j+k)^2}}-1)+3jk^2)}
\end{eqnarray}
Taking a series expansion of this and doubling for there being two entries we have:

\begin{eqnarray}
   \max \Lambda(j,k) &\leq & \begin{cases}
        2\sqrt{j^2k'-j^3+\frac{j^4}{4k'}} & 2j\ll k'\\
        2\sqrt{\frac{j(k')^2}{2}-\frac{1}{8}(k')^3+\frac{1}{128}\frac{(k')^4}{j}} & k'\ll 2j
    \end{cases}\\
    &\approx & \begin{cases}
        2[j\sqrt{k'}-\frac{1}{2}\frac{j^2}{\sqrt{k}}+\frac{1}{8}\frac{j^4}{k^{5/2}}+O(j^5/(k')^{7/2})] & 2j\ll k'\\
        2[\frac{1}{\sqrt{2}}k'\sqrt{j}-\frac{1}{8\sqrt{2}}\frac{(k')^2}{\sqrt{j}}+\frac{1}{512}\frac{(k')^4}{j^{5/2}}+O((k')^5/j^{7/2})] & k'\ll 2j
    \end{cases}
\end{eqnarray}

\end{proof}

\begin{proof}[Proof of equation (\ref{degenmoment})]
This relation can be seen as the weighted average of averages, and can thus be derived as follows:
\begin{align*}
    \avg{\Lambda(k)^t} &= \frac{1}{D_k}\sum_{j} d_j \abs{\mathcal{B}_{j,k}} \avg{\Lambda(j,k)^t}\\
    &=\frac{1}{D_k}\sum_{j} d_j \abs{\mathcal{B}_{j,k}} \frac{\Tr(L(j,k)^t}{\abs{\mathcal{B}_{j,k}}}\\
    &=\frac{1}{D_k}\sum_{j} d_j \Tr (L(j,k)^t).
\end{align*}
\end{proof}

\begin{lemma}\label{fraction}
The degeneracies in our system satisfy:
\begin{equation}
    \frac{d_j}{2^N}=O(\frac{1}{N})
\end{equation}
for all allowed values of $j$.
\end{lemma}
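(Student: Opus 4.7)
The plan is to bound $d_j/2^N$ uniformly in $j$ by combining an identity that expresses $d_j$ through a binomial coefficient with a Gaussian tail estimate on that binomial, and then splitting into two regimes where different factors dominate.

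First, I would rewrite the degeneracy in the form
$$d_j = \frac{2j+1}{N/2+j+1}\binom{N}{N/2-j},$$
which follows immediately from the factorial expression for $d_j$ given earlier in the paper. This factorization isolates a ``weight'' piece $(2j+1)/(N/2+j+1)$ that grows in $j$ from an $O(1/N)$ prefactor that decays in $j$.

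Second, I would apply the standard Stirling-based Gaussian estimate
$$\frac{1}{2^N}\binom{N}{N/2-j}\leq \frac{C_1}{\sqrt{N}}\,e^{-2j^2/N}$$
for some absolute constant $C_1$ and all allowed $j\in\{0,\tfrac{1}{2},\ldots,N/2\}$, which together with the identity above gives
$$\frac{d_j}{2^N}\leq \frac{C_1(2j+1)}{(N/2+j+1)\sqrt{N}}\,e^{-2j^2/N}.$$

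Third, I would split into two regimes. For $j\leq N/4$, I would use $\tfrac{2j+1}{N/2+j+1}\leq \tfrac{4(2j+1)}{N}$ and note that $(2j+1)e^{-2j^2/N}$ is maximized near $j\sim\sqrt{N}/2$ with maximal value of order $\sqrt{N}$, which produces $d_j/2^N=O(1/N)$ uniformly on this range. For $j>N/4$, I would use $\tfrac{2j+1}{N/2+j+1}\leq 2$ together with $e^{-2j^2/N}\leq e^{-N/8}$, so that the bound is exponentially small in $N$ and certainly $O(1/N)$.

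The only real obstacle is careful bookkeeping of the Stirling constant to ensure a single constant works across all $j$; the case split at $j=N/4$ handles precisely the region where the central-limit Gaussian approximation becomes loose, since there the exponential factor beats any polynomial prefactor. Note that the resulting bound is tight at $j=j^*=\Theta(\sqrt{N})$, consistent with the explicit asymptotics of the maximally degenerate subspace derived earlier.
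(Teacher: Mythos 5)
Your proof is correct. The core estimate is the same as the paper's --- both factor $d_j$ into the $O(j/N)$ weight $\tfrac{2j+1}{N/2+j+1}$ times a near-central binomial coefficient and then invoke a Gaussian/Stirling estimate of that coefficient --- but the organization differs in a meaningful way. The paper first reduces to the single point $j=j^*$ via the maximality $d_j\leq d_{j^*}$ established earlier, and then applies a \emph{pointwise} asymptotic equivalence $\binom{N}{k}\sim \tfrac{2^N}{\sqrt{N\pi/2}}e^{-(N-2k)^2/(2N)}$ at $k=N/2+\sqrt{N}/2+1$; your argument never uses $j^*$ or its asymptotics, instead bounding $d_j/2^N$ \emph{uniformly} in $j$ with a case split at $j=N/4$. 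What your route buys is self-containedness (no dependence on the $j^*$ derivation, and no need to worry that the paper evaluates $d$ at the truncated approximation of $j^*$ rather than at the true maximizer); what it costs is that you need the Gaussian bound as a genuine uniform domination $\tfrac{1}{2^N}\binom{N}{N/2-j}\leq \tfrac{C_1}{\sqrt{N}}e^{-cj^2/N}$ valid for all $j$ up to $N/2$, which is a slightly stronger (though still standard) tool than a pointwise asymptotic near the center. One small caution: the constant $2$ in your exponent is the Hoeffding value and is only guaranteed asymptotically for $j=o(N)$; for $j$ of order $N$ the exact constant degrades, but since your $j>N/4$ branch only needs \emph{some} exponential decay $e^{-cN}$ to beat $1/N$, this does not affect the argument. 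Both proofs correctly identify that the bound is saturated at $j=\Theta(\sqrt{N})$, where the $O(j/N)$ prefactor times the $O(1/\sqrt{N})$ binomial density gives exactly $O(1/N)$.
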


\begin{proof}
Since $d_j < d_{j^*}$ for each $j$, we particularize to $j=j^*$. Then, taking only the leading term of $j^* = \sqrt{N}/2$, we have
\begin{equation}
    d_{j^*} = \frac{\sqrt{N}/2 + 1}{N/2 + \sqrt{N}/2 + 1}\binom{N}{N/2 + \sqrt{N}/2 + 1}.
\end{equation}
Focusing on the first factor,
\begin{equation}
    \frac{2}{\frac{N+1}{\sqrt{N} + 1} + 1 } = O(1/\sqrt{N}).
\end{equation}

Since $k = N/2 + \sqrt{N}/2 + 1$, we have that $\abs{N/2 - k} = o(n^{2/3})$, we can utilize the following asymptotic equivalence relation\cite{spencer2014asymptopia}:
\begin{equation}
    \binom{N}{k} \sim \frac{2^N}{\sqrt{N\pi/2}}e^{-(N-2k)^2 / (2N)}.
\end{equation}
Then, using the fact that $N-2k = \sqrt{N} - 2$, we find that 
\begin{align}
    \nonumber e^{-(N-2k)^2 / (2N)} &= e^{-(\sqrt{N}-2)^2/(2N)}\\
    \nonumber &= e^{-1/2 + 2/\sqrt{N} - 2/N}\\
    &= \frac{1}{\sqrt{e}} \big(1 + O(1/\sqrt{N})\big).
\end{align}

Putting together the leading term with the asymptotic equivalence relation, we find
\begin{align}
    \nonumber \binom{N}{N/2 + \sqrt{N}/2 + 1} &\sim \frac{2^N}{\sqrt{N\pi e/2}} \big(1 + O(1/\sqrt{N})\big)\\
    &= O(2^N/\sqrt{N}).
\end{align}
This finally implies
\begin{equation}
    d_{j^*} = O(2^N / N),
\end{equation}
and thus it hold that for all allowed $j$,
\begin{equation}
    d_{j} 2^{-N} =  O(1/N).
\end{equation}
\end{proof}

\begin{proof}[Proof of equation (\ref{degenavg})]
\if{false}To show this result, we first bound the proportion of states in a given angular momentum subspace relative to the whole. That is, for all allowed values of $j$,
\begin{equation}
    \frac{d_j}{2^N} = O\big(\frac{1}{N}\big).
\end{equation}
As proven in above in Lemma \ref{fraction}.\fi

In order to derive our result, we particularize to $k>N$, since this fixes $D_k = 2^N$ and $k > N/2 + j$ is true for each value of $j$. Then,
\begin{equation}
    \operatorname{Var}(\Lambda(k)) = \frac{1}{2^N}\sum_{j} d_j \Tr (L(j,k)^2).  
\end{equation}

Recall the trace of the square of a coupling matrix is given by,
\begin{align}\label{eq:tr_of_sqr}
    \nonumber \Tr(L(j,k)^2) &= \frac{1}{2}\abs{\mathcal{B}_{j,k}}^4 -     \frac{1}{3}\abs{\mathcal{B}_{j,k}}^3(2k'+4j+7)+ \abs{\mathcal{B}_{j,k}}^2(2jk'+2k'+4j+7/2)- \frac{1}{3}\abs{\mathcal{B}_{j,k}}(6jk' + 8j + 4k' + 5)
\end{align}
Using the fact now that $\abs{\mathcal{B}_{j,k}} = 2j+1$ and $k' = k - N/2 + j$, we find that
\begin{align}
    \nonumber \Tr(L(j,k)^2) &= k\big(\frac{8}{3}j^3 + 4j^2 + \frac{4}{3}j\big)- N\big(\frac{4}{3}j^3 + 2j^2 + \frac{2}{3}j\big)+ \big(\frac{4}{3}j^3 + 2j^2 + \frac{2}{3}j\big).
\end{align}
We focus on only the terms of order $k$, thus the dominant part of the expression we wish to analyze is given by
\begin{equation}
\frac{4}{3}k(2j^3 + 3j^2 + j).
\end{equation}
It remains to determine the order $k$ contribution to the entire variance, upon averaging over the degeneracies, 
\begin{equation}
    \operatorname{Var}(\Lambda(k)) = \frac{4}{3} k \sum_{j} \frac{d_j}{2^N} (2j^3 + 3j^2 + j) + \cdots,
\end{equation}
where the terms of order $k^0$ will be dropped moving forward. 
    
In order to make the sum over $j$ tractable, we make use of Lemma \ref{fraction},
\begin{equation}
    \frac{d_j}{2^N} = O\big(\frac{1}{N}\big).
\end{equation}
Given that the strong support of the weighting function is from $0$ to $O(\sqrt{N})$, we have,
\begin{align}
    \nonumber \sum_{j} \frac{4 k d_j}{2^N} \frac{2j^3 + 3j^2 - 2j}{3} &=  \frac{4k}{3} \sum_{j} O\big(\frac{1}{N}\big) (2j^3 + 3j^2 - 2j)\\
    \nonumber &\approx O\big(\frac{k}{N}\big)  \sum_{j=0}^{O(\sqrt{N})}(2j^3 + 3j^2 - 2j)\\
    \nonumber &= O\big(\frac{k}{N}\big) O(N^2)\\
    &= O(Nk).
\end{align}
\end{proof}
\if{false}

\begin{proof}[Proof of equation \eqref{eq:lambda_max_cases}]
Consider the special case of a tridiagonal Toelpitz matrix 
    \begin{equation}
        A_{n}=\left[\begin{array}{cccccc}a & b & 0 & 0 & \cdots & 0 \\ c & a & b & 0 & \cdots & 0 \\ 0 & c & a & b & \cdots & 0 \\ \vdots & \vdots & \vdots & \ddots & \cdots & 0 \\ \vdots & \vdots & \vdots & \vdots & a & b \\ 0 & 0 & 0 & \cdots & c & a\end{array}\right],
    \end{equation}
    with eigenvalues given by \cite{noschese2013tridiagonal}
    \begin{equation}
        \lambda_k = a + 2\sqrt{bc}\cos{\frac{k\pi}{n+1}}.
    \end{equation}
    for $k = 1,2, \dots, n$.
    Now consider the case of a symmetric tridiagonal but not Toelpitz matrix 
    \begin{equation}
        B_{n}=\left[\begin{array}{cccccc}0 & b_1 & 0 & 0 & \cdots & 0 \\ b_1 & 0 & b_2 & 0 & \cdots & 0 \\ 0 & b_2 & 0 & b_3 & \cdots & 0 \\ \vdots & \vdots & \vdots & \ddots & \cdots & 0 \\ \vdots & \vdots & \vdots & \vdots & 0 & b_{n} \\ 0 & 0 & 0 & \cdots & b_{n} & 0\end{array}\right],
    \end{equation}
    with all entries non--negative and each $b_i$ bounded above such that there exists a maximal entry, $b_{max}$, so that $b_i \leq b_{\operatorname{max}}$ for all $i$. Then the eigenvector corresponding to the largest eigenvalue must have entries which are all positive by Perron-Frobenius theorem, and its eigenvalue is an increasing function in $b_i$ by the Gershgorin circle theorem. In the Toelpitz matrix formed with $b_{max}$ on the off diagonals, the largest eigenvalue is $2b_{max}\cos{\frac{1}{n+1}}$.  Thus, for the symmetric tridiagonal matrix the supremum of the eigenvalues is $2b_{\operatorname{max}} \cos{\frac{1}{n+1}}$.
    
    Going back to our coupling matrix, we have that each $l_i(j,k)$ is non--negative and bounded by $j\sqrt{k}$, and $n_{j,k} = \min\{2j, k-k_0(j)\}$. Thus, the supremum of the eigenvalues of the coupling matrix $L(j,k)$ is
    \begin{align}
        \operatorname{sup} \Lambda(j, k) = 
        \begin{cases}
        2j\sqrt{k}\cos{\frac{1}{2(2j+1)}}, & n_{j,k} = 2j+1 \\
        2j\sqrt{k}\cos{\frac{1}{2(k-k_0(j))}}, & n_{j,k} = k-k_0(j).
    \end{cases}
    \end{align}
\end{proof}

\fi
\clearpage

\subsubsection*{N=2 Case of the Tavis--Cummings Model}

Here we solve the Tavis--Cummings model for the case of two spins interacting with a cavity. We show that this problem, beyond the first couple of rungs of the spectrum, can be solved in two different ways: one where we just solve the full matrix for two spin-$1/2$'s and one where we split the matrix into a spin-0 space and spin-1 space.

Keeping with the same convention as before the spectrum can be written graphically as:

\[
\begin{matrix}
&\vdots & & \vdots & & \vdots & & \vdots\\
\vspace{0.3cm}\\
k=3 &\rule{1cm}{0.1cm} & \hspace{1cm} & \rule{1cm}{0.1cm} & \hspace{1cm} & \rule{1cm}{0.1cm} & \hspace{1cm} & \rule{1cm}{0.1cm}\\
\vspace{0.5cm}\\
k=2 &\rule{1cm}{0.1cm} & \hspace{1cm} & \rule{1cm}{0.1cm} & \hspace{1cm} & \rule{1cm}{0.1cm} & \hspace{1cm} & \rule{1cm}{0.1cm}\\
\vspace{0.5cm}\\
k=1 &\rule{1cm}{0.1cm} & \hspace{1cm} & \rule{1cm}{0.1cm} & \hspace{1cm} & \rule{1cm}{0.1cm}\\
\vspace{0.5cm}\\
k=0 &\rule{1cm}{0.1cm} & & \\
\vspace{0.2cm}\\
&|\downarrow\downarrow\rangle & & |\downarrow\uparrow\rangle & & |\uparrow\downarrow\rangle & & |\uparrow\uparrow\rangle
\end{matrix}
\]

Again the first row's state, $|0\rangle|\downarrow\downarrow\rangle$, is unperturbed under the coupling term.

The second row has coupling matrix:
\begin{equation}
\begin{bmatrix}
0 & 1 & 1 \\
1 & 0 & 0 \\
1 & 0 & 0\\
\end{bmatrix}
\end{equation}
We find that the dressed states and perturbed energies are:
\begin{eqnarray}
    \frac{1}{2}[\sqrt{2}|1\rangle|\downarrow\downarrow\rangle+|0\rangle|\downarrow\uparrow\rangle+|0\rangle|\uparrow\downarrow\rangle] &,&\quad E=\omega+g_0\sqrt{2}\\
    \frac{1}{2}[\sqrt{2}|1\rangle|\downarrow\downarrow\rangle-|0\rangle|\downarrow\uparrow\rangle-|0\rangle|\uparrow\downarrow\rangle] &,&\quad E=\omega-g_0\sqrt{2}\\
    \frac{1}{\sqrt{2}}[|0\rangle|\downarrow\uparrow\rangle-|0\rangle|\uparrow\downarrow\rangle] &,&\quad E=\omega+0
\end{eqnarray}
The set of splittings here are $\{0,\pm g_0\sqrt{2}\}$.

For the further levels with $k$ excitations, $k\geq 2$, there are always exactly four states that we must diagonalize, providing the coupling matrix:
\begin{equation}
\begin{bmatrix}
0 & \sqrt{k} & \sqrt{k} & 0 \\
\sqrt{k} & 0 & 0 & \sqrt{k-1} \\
\sqrt{k} & 0 & 0 & \sqrt{k-1} \\
0 & \sqrt{k-1} & \sqrt{k-1} & 0\\
\end{bmatrix}
\end{equation}
Solving this directly the dressed states and energies are:

\begin{eqnarray}
    \frac{1}{\sqrt{2k-1}}[-\sqrt{k-1}|k\rangle|\downarrow\downarrow\rangle+\sqrt{k}|k-2\rangle|\uparrow\uparrow\rangle] &,&\\
    E=k\omega_0+0 & &\\
    \frac{1}{\sqrt{2}}[|k-1\rangle|\downarrow\uparrow\rangle-|k-1\rangle|\uparrow\downarrow\rangle] &,&\\
    E=k\omega_0+0 & &\\
    \frac{1}{2\sqrt{2k-1}}[\sqrt{2}\sqrt{k}|k\rangle|\downarrow\downarrow\rangle+\sqrt{2k-1}|k-1\rangle|\downarrow\uparrow\rangle+\sqrt{2k-1}|k-1\rangle|\uparrow\downarrow\rangle+\sqrt{2}\sqrt{k-1}|k-2\rangle|\uparrow\uparrow\rangle]&,&\\
    E=k\omega_0+g_0\sqrt{2}\sqrt{2k-1}& &\\
    \frac{1}{2\sqrt{2k-1}}[\sqrt{2}\sqrt{k}|k\rangle|\downarrow\downarrow\rangle-\sqrt{2k-1}|k-1\rangle|\downarrow\uparrow\rangle-\sqrt{2k-1}|k-1\rangle|\uparrow\downarrow\rangle+\sqrt{2}\sqrt{k-1}|k-2\rangle|\uparrow\uparrow\rangle]&,&\\
    E=k\omega_0-g_0\sqrt{2}\sqrt{2k-1}& &
\end{eqnarray}

This provides a complete description of the states. We carried out this computation with the Zeeman basis states. If we instead take as our basis states the collective spin states with constant angular momentum (spin-1 and spin-0), we slightly reduce the complexity of the problem. In this case the only difference is replacing $|k\rangle|\uparrow\downarrow\rangle$ and $|k\rangle|\downarrow\uparrow\rangle$ with $|k\rangle (|\uparrow\downarrow\rangle+|\downarrow\uparrow\rangle)$ and $|k\rangle (|\uparrow\downarrow\rangle-|\downarrow\uparrow\rangle)$--the first breaks off a spin-1 space while the second breaks off a spin-0 space.

In this frame, the singlet state $|\uparrow\downarrow\rangle-|\downarrow\uparrow\rangle$ is always annihilated by $J_{\pm}$, thus forming its own space. Generally this particular decomposition doesn't work, but the space is still orthogonal to the other states. In this case since this singlet state removes one vector, the coupling matrices for the above becomes smaller and thus easier to diagonalize. The resulting states and energies will of course be left unchanged.

\subsubsection*{N=3 Case of the Tavis--Cummings Model with $k\leq 2$}

For the sake of completeness we show the dressed states for $k\leq 2$ for the $N=3$ case of the Tavis--Cummings model, which were neglected in Section \ref{N3}. As always the $k=0$ state is unperturbed, giving $|0\rangle|\downarrow\downarrow\downarrow\rangle$ with energy $E=0$.

At $k=1$ there are four bases, which in terms of the standard bases gives a coupling matrix of:
\begin{equation}
    g\begin{bmatrix}
    0 & 1 & 1 & 1\\
    1 & 0 & 0 & 0\\
    1 & 0 & 0 & 0\\
    1 & 0 & 0 & 0
    \end{bmatrix}
\end{equation}
Solving this matrix, we find that the dressed states and their associated energies are given by:
\begin{eqnarray}
    \frac{1}{\sqrt{6}}[\sqrt{3}|1\rangle|\downarrow\downarrow\downarrow\rangle+|0\rangle(|\downarrow\downarrow\uparrow\rangle+|\downarrow\uparrow\downarrow\rangle+|\uparrow\downarrow\downarrow\rangle)]&,&\quad E=\omega_0+g_0\sqrt{3}\\
    \frac{1}{\sqrt{6}}[-\sqrt{3}|1\rangle|\downarrow\downarrow\downarrow\rangle+|0\rangle(|\downarrow\downarrow\uparrow\rangle+|\downarrow\uparrow\downarrow\rangle+|\uparrow\downarrow\downarrow\rangle)]&,&\quad E=\omega_0-g_0\sqrt{3}\\
    \frac{1}{\sqrt{2}}|0\rangle(|\uparrow\downarrow\downarrow\rangle-|\downarrow\downarrow\uparrow\rangle)&,&\quad E=\omega_0\\
    \frac{1}{\sqrt{2}}|0\rangle(|\downarrow\uparrow\downarrow\rangle-|\downarrow\downarrow\uparrow\rangle)&,&\quad E=\omega_0
\end{eqnarray}
At $k=2$ there are now seven bases, which in terms of the standard Zeeman bases gives a coupling matrix of:
\begin{equation}
\begin{bmatrix}
0 & \sqrt{2} & \sqrt{2} & 0 & \sqrt{2} & 0 & 0 \\
\sqrt{2} & 0 & 0 & 1 & 0 & 1 & 0 \\
\sqrt{2} & 0 & 0 & 1 & 0 & 0 & 1 \\
0 & 1 & 1 & 0 & 0 & 0 & 0 \\
\sqrt{2} & 0 & 0 & 0 & 0 & 1 & 1 \\
0 & 1 & 0 & 0 & 1 & 0 & 0 \\
0 & 0 & 1 & 0 & 1 & 0 & 0 \\
\end{bmatrix}
\end{equation}
where the basis states are $\{|k\rangle |\downarrow\downarrow\downarrow\rangle,|k-1\rangle |\downarrow\downarrow\uparrow\rangle,\ldots |k-3\rangle |\uparrow\uparrow\uparrow\rangle\}$. This can be decomposed as:
\begin{equation}
    \begin{bmatrix}
    0 & \sqrt{6} & 0 \\
    \sqrt{6} & 0 & 2 \\
    0 & 2 & 0 
    \end{bmatrix}\oplus\begin{bmatrix}
    0 & 1\\
    1 & 0
    \end{bmatrix}\oplus\begin{bmatrix}
    0 & 1\\
    1 & 0
    \end{bmatrix}
\end{equation}
The first matrix in the direct sum has dressed states:
\begin{eqnarray}
    \frac{1}{2\sqrt{3}}[\sqrt{\frac{3}{2}}|2\rangle |\downarrow\downarrow\downarrow\rangle -\sqrt{\frac{5}{2}}|1\rangle (|\downarrow\downarrow\uparrow\rangle+|\downarrow\uparrow\downarrow\rangle+|\uparrow\downarrow\downarrow\rangle)+|0\rangle (|\downarrow\uparrow\uparrow\rangle+|\uparrow\downarrow\uparrow\rangle+|\uparrow\uparrow\downarrow\rangle)]&,&\ E=2\omega_0-g_0\sqrt{10}\\
    \frac{1}{2\sqrt{3}}[\sqrt{\frac{3}{2}}|2\rangle |\downarrow\downarrow\downarrow\rangle +\sqrt{\frac{5}{2}}|1\rangle (|\downarrow\downarrow\uparrow\rangle+|\downarrow\uparrow\downarrow\rangle+|\uparrow\downarrow\downarrow\rangle)+|0\rangle (|\downarrow\uparrow\uparrow\rangle+|\uparrow\downarrow\uparrow\rangle+|\uparrow\uparrow\downarrow\rangle)]&,&\ E=2\omega_0+g_0\sqrt{10}\\
    \sqrt{\frac{3}{11}}[-\sqrt{\frac{2}{3}}|2\rangle |\downarrow\downarrow\downarrow\rangle +|0\rangle (|\downarrow\uparrow\uparrow\rangle+|\uparrow\downarrow\uparrow\rangle+|\uparrow\uparrow\downarrow\rangle)]&,&\ E=2\omega_0
\end{eqnarray}
The second and third matrices can be diagonalized in the following bases:
\begin{eqnarray}
\frac{1}{2}[|1\rangle[|\downarrow\uparrow\downarrow\rangle-|\uparrow\downarrow\downarrow\rangle]\pm|0\rangle [|\downarrow\uparrow\uparrow\rangle-|\uparrow\downarrow\uparrow\rangle]]&,&\quad E=2\omega_0 \pm g_0\\
\frac{1}{2\sqrt{3}}[|1\rangle[2|\downarrow\downarrow\uparrow\rangle-|\uparrow\downarrow\downarrow\rangle-|\downarrow\uparrow\downarrow\rangle]\pm |0\rangle [|\uparrow\downarrow\uparrow\rangle+|\downarrow\uparrow\uparrow\rangle-2|\uparrow\uparrow\downarrow\rangle]]&,&\quad E=2\omega_0 \pm g_0
\end{eqnarray}
Combining these results with those from Section \ref{N3} provides the full set of dressed states and re-diagonalized energies for the Tavis--Cummings model at $N=3$.

\end{widetext}
\end{document}